\newtheorem{assumption}{Assumption}
\newtheorem{theorem}{Theorem}
\newtheorem{lemma}[theorem]{Lemma}
\title{\LARGE \bf
Accelerated consensus in multi-agent networks via \\ memory of local averages}
\author{Aditya Bhaskar$^{*}$, Shriya Rangarajan$^{*}$, Vikram Shree$^{*}$, Mark Campbell, and Francesca Parise
\thanks{*Equal contribution. }
\thanks{A. Bhaskar, V. Shree, and M. Campbell are with the Sibley School of Mechanical and Aerospace Engineering, Cornell University, Ithaca, NY, USA. Email: {\tt\small \{ab2823, vs476, mc288\}@cornell.edu}}
\thanks{S. Rangarajan is with the Department of City and Regional Planning, Cornell University, Ithaca, NY, USA. Email: {\tt\small sr2248@cornell.edu}}
\thanks{F. Parise is with the School of Electrical and Computer Engineering, Cornell University, Ithaca, NY, USA. Email: {\tt\small fp264@cornell.edu}}
\thanks{A. Bhaskar was supported by the National Science Foundation (NSF), United States under grant number CMMI-1634664.
V. Shree was supported by the NRI program of NSF, United States under Award \#1830497.}
}
\begin{document}

\maketitle
\thispagestyle{empty}
\pagestyle{empty}

\begin{abstract}
Classical mathematical models of information sharing and updating in multi-agent networks use linear operators.
In the paradigmatic DeGroot model, agents update their states with linear combinations of their neighbors’ current states. 
In prior work, an accelerated averaging model employing the use of memory has been suggested to accelerate convergence to a consensus state for undirected networks. 
There, the DeGroot update on the current states is followed by a linear combination with the previous states. We propose a modification where the DeGroot update is applied to the current and previous states and is then followed by a linear combination step. We show that this simple modification applied to undirected networks permits convergence even for periodic networks. Further, it allows for faster convergence than the DeGroot and accelerated averaging models for suitable networks and model parameters.\end{abstract}


\section{Introduction}

Linear models for information sharing play a crucial role in multi-agent systems and sensor networks. Among these, distributed averaging algorithms have been intensively studied as a way to reach global consensus with local computations. One of the first such models was proposed by DeGroot \cite{degroot1974reaching}.
There, agents update their state by taking the weighted average of their neighbours' states at each time step. Many variations have since been studied to account for real-world phenomena, such as changing communication patterns, or individuals' stubbornness, and to optimize for the rate of convergence. 

As a starting point for this paper, we consider an accelerated averaging model first proposed by Muthukrishnan et al. in \cite{muthukrishnan1998first}. Therein, authors suggest a modification of the classic DeGroot scheme where agents update their states by first taking a DeGroot update of the current states and then performing a linear combination between this and their previous states, as shown in Fig.~\ref{fig:MuthuModelIntuit}. In \cite{muthukrishnan1998first} they show that this additional memory step allows for faster convergence in networks where the original DeGroot model also converges. 
In this paper, our modification of the Muthukrishnan et al. averaging scheme involves the agents performing DeGroot averaging on the current and previous states followed by a linear combination step, as shown in Fig.~\ref{fig:PropoModelIntuit}. Hereafter, we call this model the Memory of Local Averages (MLA) model. Besides deriving convergence guarantees for this modified scheme, our main objective is to show that this simple modification leads to two important consequences: First, we demonstrate conditions for undirected, connected and periodic networks under which MLA achieves consensus despite the DeGroot and the accelerated averaging models failing to do so. Second, we give sufficient conditions under which MLA achieves faster convergence than the DeGroot model and the accelerated averaging algorithm.

\begin{figure}
     \centering
     \begin{subfigure}[b]{0.235\textwidth}
         \centering
         \includegraphics[width=\textwidth]{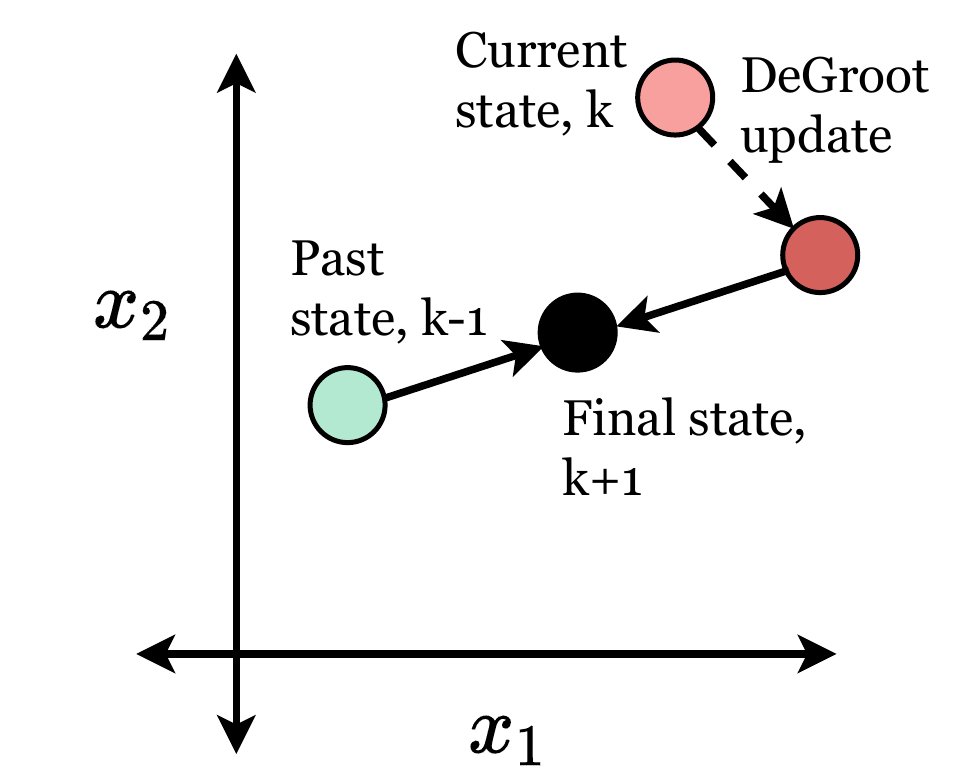}
         \caption{\small Accelerated averaging model}
         \label{fig:MuthuModelIntuit}
     \end{subfigure}
     \begin{subfigure}[b]{0.235\textwidth}
         \centering
         \includegraphics[width=\textwidth]{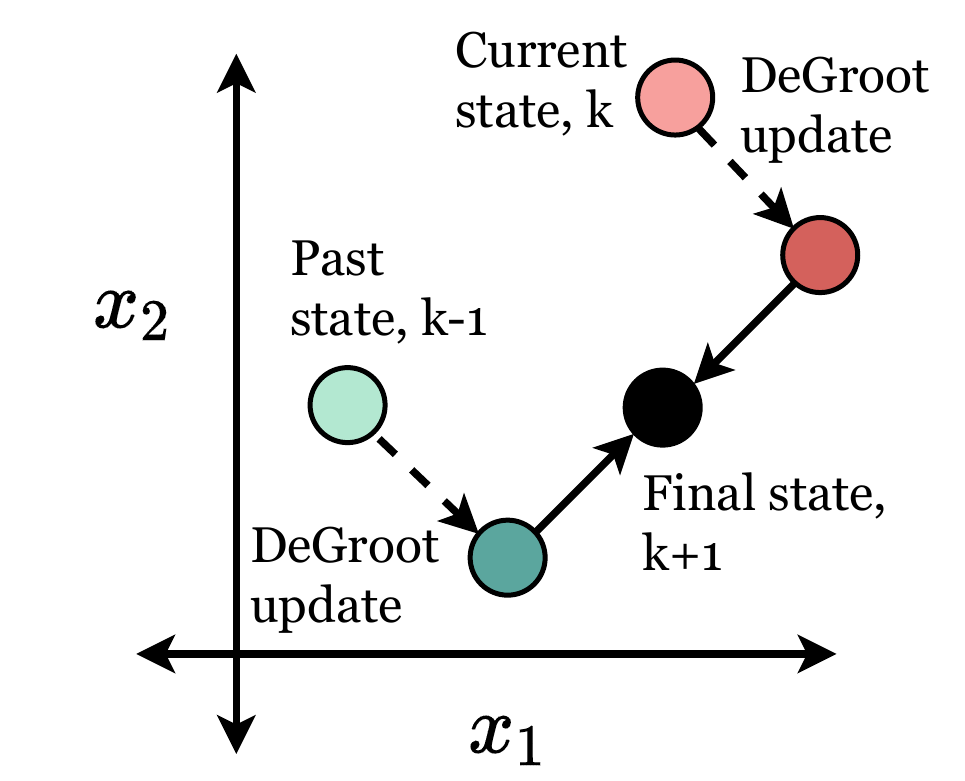}
         \caption{\small MLA model}
         \label{fig:PropoModelIntuit}
     \end{subfigure}
        \caption{\small Representation of discrete-time averaging algorithms in a two-dimensional state space $(x_1, x_2)$. (a) In the accelerated averaging model, a DeGroot update is performed on current state at $k$. This is followed by a linear combination step to get the final state at $k+1$. (b) In MLA, we apply Degroot update on the current and past states. This is followed by a linear combination of the intermediate states to obtain the final state at $k+1$.}
        \label{fig:ModelIntuit}
        \vspace{-0.2in}
\end{figure}

Our work is part of a rich literature on opinion dynamics, information sharing and consensus models. Besides classic DeGroot dynamics, several modifications have been proposed to include stubborn agents (e.g., \cite{acemouglu2013opinion}, \cite{friedkin1990social}) or dynamics on random or state-dependent networks (e.g. in \cite{deffuant2000mixing} \cite{hegselmann2002opinion}) among others. We refer to \cite{FB-LNS} and \cite{olfati2007consensus} for recent surveys. 

Our work is most related to the subsection of this literature that focuses on capturing convergence rates while maintaining the original DeGroot structure. In general, faster convergence can be achieved either by tuning the network weights as suggested in e.g. \cite{xiao2004fast} \cite{xiao2007distributed} or by modifying the update dynamics as suggested in e.g. \cite{muthukrishnan1998first}. Our work belongs to the latter class. We note that whereas in \cite{muthukrishnan1998first} and our work, faster convergence is obtained by introducing one step of memory, the use of a prediction step has also been investigated (see \cite{aysal2008accelerated, wang2013accelerated}). Similar to these works, we retain linearity of the update equation which permits for analysis using techniques of linear algebra and Markov chains. We conclude by noting that significant work focused also on deriving finite-time consensus schemes
\cite{wang2010finite, li2011finite, cao2014finite}. Such schemes however require non-linear updates, which may carry additional computational burden as compared to their linear counterparts.

\section{The MLA Model}
\subsection{Model formulation}
Let $n \geq 2$ agents be connected in a network with corresponding weighted adjacency matrix, $A = [a_{ij}]_{n\times n}$. Here, $a_{ij}$ denotes the weight that agent $i$ assigns to agent $j$. The state of each agent, $i$, at time step, $k \in \mathbb{Z}_{\geq 0}$, is denoted by $x_{i}(k)$. Throughout the paper, we use the following assumption.
\begin{assumption}
\label{NonegRowStochastic}
The weighted adjacency matrix $A_{n \times n}$ is \textit{non-negative} i.e. $a_{ij}\geq0 \ \forall \ i,j \in \{1, \hdots, n\}$ and \textit{row-stochastic} i.e. $\sum_{j=1}^{n} a_{ij} = 1 \ \forall \ i \in \{1,\hdots, n\}$.
\end{assumption}
In the DeGroot model, each agent updates its state with the weighted average of the current states of its local neighbors,
\begin{gather}
\nonumber
x_{i, \text{DG}}(k+1)= \sum_{j=1}^{n}a_{ij}x_j(k), \\
\Rightarrow \textbf{x}_{\text{DG}}(k+1) = A\textbf{x}(k), \label{DeGroot2}
\end{gather}
where $\textbf{x}(k) = [x_{1}(k), \hdots , x_{n}(k)]^{T} \in \mathbb{R}^{n}$.

In contrast, the MLA model is an averaging algorithm with memory which takes into account each agent's current and \textit{previous} states. In particular, given the current states of the agents $\textbf{x}(k)$, and the previous states $\textbf{x}(k-1)$, we first apply Degroot model to get intermediate states $\textbf{z}(k)$ and $\textbf{z}(k-1)$, respectively, as shown below.
\begin{gather}
\nonumber
\textbf{z}(k)= A\textbf{x}(k) \ \text{and} \ \textbf{z}(k-1) = A \textbf{x}(k-1). \label{LinExtrapol}
\end{gather}
We obtain the final states $\textbf{x}(k+1)$ by taking a linear combination of $\textbf{z}(k)$ and $\textbf{z}(k-1)$, with weighting parameter $\gamma$.
\begin{gather}
\nonumber
\textbf{x}(k+1) = \gamma \textbf{z}(k) + (1-\gamma)\textbf{z}(k-1)\\
\Rightarrow \textbf{x}(k+1) = \gamma A \textbf{x}(k) +(1-\gamma) A \textbf{x}(k-1). \label{Model4}
\end{gather}
In MLA, each agent need only store its local average at the previous time step instead of the true states of its neighbors. Note that for $\gamma = 1$, Eq.~(\ref{Model4}) reduces to the DeGroot model.

\subsection{Augmented system}
To study convergence properties of system given by Eq.~\ref{Model4} we define an augmented state, $\hat{\textbf{x}}(k) \in \mathbb{R}^{2n}$ and an augmented iteration matrix, $\hat{A} \in \mathbb{R}^{2n \times 2n}$, as follows 
\begin{gather}
\hat{\textbf{x}}(k) \equiv \begin{bmatrix}
       \textbf{x}(k) \\[0.5em]
       \textbf{x}(k-1) 
     \end{bmatrix}, \label{xhat1} \\
\Rightarrow \hat{\textbf{x}}(k+1) = \begin{bmatrix}\begin{array}{c|c}
       \gamma A & (1-\gamma) A \\[0.5em]
       \hline
       \mathbb{I}_{n\times n} & \mathbb{0}_{n\times n} 
     \end{array}\end{bmatrix} \hat{\textbf{x}}(k) \equiv \hat{A}\hat{\textbf{x}}(k).\label{xhat3}
\end{gather}

Spectral properties of $\hat{A}$ in relation to $A$ are discussed in the following lemma.

\begin{lemma}
\label{Eigenpair}
If ($\lambda, \textbf{v}$) denotes an eigenpair of $A$, then the eigenpairs of $\hat{A}$ are ($\hat{\lambda}_{\pm}, \hat{\textbf{v}}_{\pm}$) where,
\begin{gather}
\hat{\lambda}^2 - \gamma \lambda \hat{\lambda} + (\gamma-1)\lambda = 0, \label{quadraticeqn} \\
\Rightarrow \hat{\lambda}_{\pm} = \frac{\gamma \lambda \pm \sqrt{\gamma^2 \lambda^2 -4 (\gamma-1) \lambda}}{2},  \label{quadroots} \\
\hat{\textbf{v}}_{\pm} = \begin{bmatrix}
      \hat{\lambda}_{\pm}\textbf{v} \\[0.5em]
       \textbf{v} 
     \end{bmatrix}.\label{Eigvec} 
\end{gather}
\end{lemma}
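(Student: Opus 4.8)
The plan is to verify directly that each candidate pair $(\hat{\lambda}_\pm, \hat{\textbf{v}}_\pm)$ satisfies $\hat{A}\hat{\textbf{v}}_\pm = \hat{\lambda}_\pm \hat{\textbf{v}}_\pm$, and then argue that this construction exhausts the spectrum of $\hat{A}$. First I would take an eigenpair $(\lambda, \textbf{v})$ of $A$ and compute the block product
\begin{gather}
\nonumber
\hat{A}\begin{bmatrix} \mu \textbf{v} \\[0.3em] \textbf{v}\end{bmatrix}
= \begin{bmatrix} \gamma A (\mu \textbf{v}) + (1-\gamma) A \textbf{v} \\[0.3em] \mu \textbf{v}\end{bmatrix}
= \begin{bmatrix} (\gamma \mu + 1 - \gamma)\lambda\, \textbf{v} \\[0.3em] \mu \textbf{v}\end{bmatrix},
\end{gather}
for an as-yet-undetermined scalar $\mu$. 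For this to equal $\mu \begin{bmatrix} \mu \textbf{v} \\ \textbf{v}\end{bmatrix}$ we need the bottom block to match automatically (it does) and the top block to give $(\gamma\mu + 1 - \gamma)\lambda = \mu^2$, i.e. $\mu^2 - \gamma\lambda\mu - (1-\gamma)\lambda = 0$, which is exactly Eq.~(\ref{quadraticeqn}) with $\hat\lambda$ renamed $\mu$. Solving this quadratic in $\mu$ yields the two roots in Eq.~(\ref{quadroots}), and substituting them back into $\begin{bmatrix}\mu\textbf{v} \\ \textbf{v}\end{bmatrix}$ gives the eigenvectors in Eq.~(\ref{Eigvec}). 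This establishes that each $(\hat\lambda_\pm, \hat{\textbf{v}}_\pm)$ is genuinely an eigenpair of $\hat A$.

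To see that these are \emph{all} the eigenpairs, I would count: $A$ has $n$ eigenvalues (with multiplicity), and each contributes two roots of the associated quadratic, giving $2n$ eigenvalues of $\hat A$, which matches $\dim \hat{\textbf{x}} = 2n$. I would also note that the corresponding eigenvectors are linearly independent whenever the construction is nondegenerate: the bottom blocks are just the eigenvectors $\textbf{v}$ of $A$, and for a fixed $\textbf{v}$ the two vectors $\hat{\textbf{v}}_+$ and $\hat{\textbf{v}}_-$ differ precisely when $\hat\lambda_+ \neq \hat\lambda_-$. Alternatively, and more cleanly, one can argue via the characteristic polynomial: using the block structure of $\hat A - \hat\lambda \mathbb{I}$ and the Schur-complement identity for block determinants, $\det(\hat A - \hat\lambda \mathbb{I}) = \det\!\big(\hat\lambda^2 \mathbb{I} - \gamma\hat\lambda A - (1-\gamma)A\big)$ up to sign, and since $A$ is diagonalizable-in-spirit one can factor this as $\prod_{i=1}^n \big(\hat\lambda^2 - \gamma\lambda_i\hat\lambda - (1-\gamma)\lambda_i\big)$, whose roots are exactly the $\hat\lambda_\pm$ associated to each $\lambda_i$.

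The main obstacle, and the reason the lemma should be stated and proved with a little care, is the degenerate cases. When $\lambda = 0$, one root is $\hat\lambda = 0$ and the other is $\hat\lambda = \gamma\lambda = 0$ as well, but also the candidate eigenvector $\hat{\textbf{v}}_\pm = \begin{bmatrix}0 \\ \textbf{v}\end{bmatrix}$ need not actually be an eigenvector — one should check this separately. Likewise, when the discriminant $\gamma^2\lambda^2 - 4(\gamma-1)\lambda$ vanishes we get $\hat\lambda_+ = \hat\lambda_-$ and only one eigenvector from that $\textbf{v}$, so $\hat A$ may fail to be diagonalizable and one must pass to generalized eigenvectors or a Jordan block. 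For the purposes of this paper, where the goal is convergence analysis, what really matters is the spectral \emph{radius} and which eigenvalues lie on the unit circle, so I would emphasize the algebraic correspondence $\lambda \mapsto \{\hat\lambda_+, \hat\lambda_-\}$ via the quadratic and treat the eigenvector formula as valid in the generic (distinct-root, $\lambda \neq 0$) case, flagging the exceptions rather than exhaustively resolving them. The computation itself is routine matrix-block algebra; the only genuine content is recognizing that the quadratic Eq.~(\ref{quadraticeqn}) is precisely the consistency condition forced by the block structure.
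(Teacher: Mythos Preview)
Your proof is correct and follows essentially the same block-eigenanalysis route as the paper: both reduce the eigenproblem for $\hat{A}$ to the quadratic $\hat{\lambda}^2-\gamma\lambda\hat{\lambda}+(\gamma-1)\lambda=0$ via the relation $\mathbf{v}_{\text{Top}}=\hat{\lambda}\,\mathbf{v}_{\text{Bottom}}$. The only differences are cosmetic---the paper starts from an arbitrary eigenpair of $\hat{A}$ and shows the bottom block must be an eigenvector of $A$, whereas you start from an eigenpair of $A$ and verify the ansatz directly---and you add a completeness argument (counting, or the Schur-complement factorization of the characteristic polynomial) and a discussion of the degenerate cases, neither of which appears in the paper's proof.
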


\begin{proof}
 We perform eigenanalysis on $\hat{A}$. Let an eigenvalue and corresponding right eigenvector be represented by $\hat{\lambda} \in \mathbb{C}$ and $\hat{\mathbf{v}} \in \mathbb{C}^{2n}$  respectively, s.t.,
\begin{gather}
\hat{A}\hat{\mathbf{v}} = \hat{\lambda}\hat{\mathbf{v}}.\label{Eig1}
\end{gather}
Define $\mathbf{v}_\text{Top}, \mathbf{v}_\text{Bottom} \in \mathbb{C}^{n}$ s.t. $\hat{\mathbf{v}} = [\mathbf{v}_\text{Top}; \mathbf{v}_\text{Bottom}]$. Substituting the expressions for $\hat{A}$ and $\hat{\mathbf{v}}$ in Eq.~(\ref{Eig1}), we get, 
\begin{gather}
\nonumber
\begin{bmatrix}\begin{array}{c|c}
       \gamma A & (1-\gamma) A \\[0.5em]
       \hline
       \mathbb{I}_{n \times n} & \mathbb{0}_{n \times n} 
     \end{array}\end{bmatrix} 
     \begin{bmatrix}
       \mathbf{v}_\text{Top} \\[0.5em]
       \mathbf{v}_\text{Bottom}  
     \end{bmatrix} = \hat{\lambda} \begin{bmatrix}
       \mathbf{v}_\text{Top} \\[0.5em]
       \mathbf{v}_\text{Bottom}  
     \end{bmatrix}, \\
\Rightarrow \begin{bmatrix}
       \gamma A \mathbf{v}_\text{Top} +(1-\gamma)A \mathbf{v}_\text{Bottom} \\[0.5em]
       \mathbf{v}_\text{Top} 
     \end{bmatrix} = \begin{bmatrix}
      \hat{\lambda} \mathbf{v}_\text{Top} \\[0.5em]
       \hat{\lambda} \mathbf{v}_\text{Bottom}
     \end{bmatrix}. \label{Eig2}    
\end{gather}
Solving the system of Eq.~(\ref{Eig2}), we get,
\begin{gather} 
\nonumber
\gamma \hat{\lambda} A \mathbf{v}_\text{Bottom} +(1-\gamma) A \mathbf{v}_\text{Bottom} = \hat{\lambda}^2 \mathbf{v}_\text{Bottom}, \\
\Rightarrow A \mathbf{v}_\text{Bottom} = \left(\frac{ \hat{\lambda}^2}{\gamma \hat{\lambda} +1-\gamma}\right) \mathbf{v}_\text{Bottom}. \label{aHatEigEq}
\end{gather}
Let $(\lambda, \mathbf{v})$ denote an eigenpair for $A$. We get,
\begin{gather}
\lambda = \frac{ \hat{\lambda}^2}{\gamma \hat{\lambda} +1-\gamma},\label{aHatEig}\\
\mathbf{v}_\text{Bottom} = \mathbf{v}. \label{vbottom}
\end{gather}

Eq.~(\ref{aHatEig}) can be rewritten to obtain the eigenvalue $\hat{\lambda}$ as follows,
\begin{gather}
\hat{\lambda}^2 - \gamma \lambda \hat{\lambda} + (\gamma-1)\lambda = 0, \\
\Rightarrow \hat{\lambda}_{\pm} = \frac{\gamma\lambda \pm \sqrt{\gamma^2 \lambda^2 -4 (\gamma-1) \lambda}}{2}.
\end{gather}

Substituting Eq.~(\ref{vbottom}) in Eq.~(\ref{Eig2}), we get,
\begin{gather}
\mathbf{v}_\text{Top} = \hat{\lambda} \mathbf{v}. \label{vtop}
\end{gather}

Combining Eq.~(\ref{vbottom}) and (\ref{vtop}), we get the eigenvector $\hat{\textbf{v}}$ as follows,
 \begin{gather}
\hat{\textbf{v}}_{\pm} = \begin{bmatrix}
      \hat{\lambda}_{\pm}\textbf{v} \\[0.5em]
       \textbf{v} 
     \end{bmatrix}.
\end{gather}
\end{proof}

\section{Convergence Guarantees and Consensus Value}
\label{convergenceCriteriaSec}

\subsection{Convergence guarantees for connected undirected networks }

Convergence properties of (\ref{xhat3}) (and thus (\ref{Model4})) can be connected to spectral properties of $\hat{A}$ as stated in Lemma~\ref{SemiConvCriteria}. 

\begin{lemma}[Theorem 2.7 in \cite{FB-LNS}]
\label{SemiConvCriteria}
A square matrix, $B \in \mathbb{R}^{n \times n}$, is semi-convergent and not convergent i.e. $\lim_{k \to \infty}B^k$ exists different from $\mathbb{0}_{n\times n}$ if and only if (i) 1 is an eigenvalue of $B$ and is semi-simple (ii) all other eigenvalues of $B$ have magnitude strictly less than 1.
\end{lemma}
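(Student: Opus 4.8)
Since Lemma~\ref{SemiConvCriteria} is quoted verbatim from \cite{FB-LNS}, I only sketch the standard argument one would give. The plan is to pass to the Jordan canonical form and analyse powers block by block. Write $B = SJS^{-1}$ with $J$ a Jordan matrix; then $B^k = SJ^kS^{-1}$, and since $S$ is a fixed invertible matrix, $\lim_{k\to\infty}B^k$ exists if and only if $\lim_{k\to\infty}J^k$ exists, and the limit is nonzero for one precisely when it is for the other. For a single Jordan block $J_\lambda = \lambda \mathbb{I} + N$ of size $s$ (with $N$ nilpotent, $N^s = \mathbb{0}$), one has $J_\lambda^k = \sum_{j=0}^{s-1}\binom{k}{j}\lambda^{k-j}N^j$, so the entrywise behaviour of $J^k$ is governed by the scalar sequences $\binom{k}{j}\lambda^{k-j}$.

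For the sufficiency direction, assume (i)--(ii). Semi-simplicity of the eigenvalue $1$ means every Jordan block attached to $1$ is $1\times 1$, so the part of $J$ belonging to $1$ is an identity block $\mathbb{I}_{m\times m}$, where $m\ge 1$ is the multiplicity of $1$; every other block corresponds to an eigenvalue with $|\lambda|<1$, for which $\binom{k}{j}|\lambda|^{k-j}\to 0$ because geometric decay dominates polynomial growth. Hence $J^k$ converges to the matrix equal to $\mathbb{I}_{m\times m}$ on the invariant subspace of $1$ and to $\mathbb{0}$ elsewhere, which is nonzero; conjugating back by $S$ shows $B$ is semi-convergent and not convergent.

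For the necessity direction, assume $B^k\to L\neq\mathbb{0}_{n\times n}$; then $J^k\to S^{-1}LS\neq\mathbb{0}_{n\times n}$, so every entry of $J^k$ converges. If some eigenvalue had $|\lambda|>1$, the diagonal entry $\lambda^k$ of $J^k$ would be unbounded, so all eigenvalues satisfy $|\lambda|\le 1$. If an eigenvalue with $|\lambda|=1$ occurred in a block of size $\ge 2$, the entry $\binom{k}{1}\lambda^{k-1}=k\lambda^{k-1}$ would have modulus $k\to\infty$, so every unit-modulus eigenvalue is semi-simple. If $|\lambda|=1$ with $\lambda\neq 1$, the sequence $\lambda^k$ does not converge (it is periodic with period larger than $1$ when $\lambda$ is a root of unity and dense in the unit circle otherwise), so $1$ is the only eigenvalue of modulus $1$; this is (ii). Finally $1$ must be an eigenvalue at all, since otherwise every eigenvalue has $|\lambda|<1$ and the computation from the sufficiency direction forces $B^k\to\mathbb{0}_{n\times n}=L$, contradicting $L\neq\mathbb{0}_{n\times n}$. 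Thus $1$ is an eigenvalue and it is semi-simple, which is (i), completing the equivalence.

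The only point requiring genuine care --- and the heart of the statement --- is the dichotomy on the unit circle: the powers $\lambda^k$ converge exactly when $\lambda=1$ (or $|\lambda|<1$), while the polynomial prefactors $\binom{k}{j}$ arising from nontrivial Jordan blocks are harmless precisely when $|\lambda|<1$. An alternative that avoids the Jordan form is to use a Schur triangularization and argue directly with the resulting upper-triangular powers, but the block-diagonal Jordan bookkeeping above is the most transparent.
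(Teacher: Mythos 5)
The paper does not actually prove this lemma: it is imported verbatim as Theorem 2.7 of \cite{FB-LNS} and used as a black box, so there is no in-paper argument to compare yours against. Your Jordan-form proof is correct and is the standard one. The sufficiency direction correctly reduces to the facts that semi-simplicity of $1$ makes its Jordan blocks $1\times 1$ and that $\binom{k}{j}|\lambda|^{k-j}\to 0$ whenever $|\lambda|<1$; the necessity direction correctly rules out $|\lambda|>1$ (unbounded diagonal entries $\lambda^k$), non-semi-simple unit-modulus eigenvalues (the superdiagonal entry $k\lambda^{k-1}$ diverges in modulus), and unit-modulus eigenvalues other than $1$ (non-convergence of $\lambda^k$), and finally forces $1\in\text{spec}(B)$ from $L\neq\mathbb{0}_{n\times n}$. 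The only cosmetic remarks are that the identity $J_\lambda^k=\sum_{j=0}^{s-1}\binom{k}{j}\lambda^{k-j}N^j$ should be read for $k\geq s-1$ (which is all that matters for the limit), and that the "not convergent" clause in the statement is exactly your observation that the limit equals the (nonzero) spectral projection onto the eigenspace of $1$. Nothing is missing; what you have supplied is a complete proof of a result the paper only cites.
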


\begin{assumption}
\label{irreducibleAssume}
The weighted adjacency matrix $A$ is \textit{symmetric} i.e. $a_{ij}=a_{ji} \ \forall \ i,j \in \{1,2,\hdots,n\}$ and \textit{irreducible} i.e. $\sum_{k=0}^{n-1} A^{k} >0$.
\end{assumption}

\textit{Remark:} Assumption \ref{irreducibleAssume} is equivalent to the network being undirected and connected. We define the spectrum of $A$ as the set of all eigenvalues of $A$ i.e. $\text{spec}(A) \equiv \{\lambda_1, \lambda_2,\hdots,\lambda_n\}$ with the condition $\lambda_1 \geq \lambda_2 \geq \hdots \geq \lambda_n$ and the corresponding eigenvectors to be $\{\textbf{v}_1, \textbf{v}_2,\hdots,\textbf{v}_n\}$. Due to Perron-Frobenius theorem for irreducible matrices, under Assumption~\ref{irreducibleAssume}, (i) $\lambda_1 = 1$ is a simple eigenvalue of $A$, (ii) $|\lambda_i|\leq1 \ \forall \ i \in \{2,3,\hdots,n\}$ such that $1 = \lambda_1 > \lambda_2 \geq \hdots \geq \lambda_n \geq -1$.
\begin{theorem}
\label{irreducibleConvCriteria}
 Suppose Assumptions \ref{NonegRowStochastic} and \ref{irreducibleAssume} hold. Then $\hat{A}$ is semi-convergent if and only if:
 \begin{enumerate}[(i)]
     \item $\gamma \in (0,2)$,
     \item $2\gamma\lambda_{n}-\lambda_{n}+1 > 0$, 
 \end{enumerate}
 where $\lambda_{n}$ is the smallest eigenvalue of A. 
 \end{theorem}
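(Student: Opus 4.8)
\emph{Proof idea.} The plan is to combine Lemma~\ref{Eigenpair}, which parametrises $\mathrm{spec}(\hat A)$ by $\mathrm{spec}(A)$, with the semi-convergence test of Lemma~\ref{SemiConvCriteria}. I would first pin down the spectral bookkeeping: a Schur-complement expansion of the block form of $\hat A$ in~\eqref{xhat3} gives $\det(zI_{2n}-\hat A)=\prod_{i=1}^{n}q_i(z)$ with $q_i(z):=z^2-\gamma\lambda_i z+(\gamma-1)\lambda_i$, so the $2n$ eigenvalues of $\hat A$, counted with multiplicity, are exactly the roots of the $n$ quadratics $q_i$ (consistent with Lemma~\ref{Eigenpair}). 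Since $\lambda_1=1$ gives $q_1(z)=(z-1)(z-(\gamma-1))$, the value $1$ is always an eigenvalue of $\hat A$, so $\hat A$ is never convergent and, by Lemma~\ref{SemiConvCriteria}, semi-convergence is equivalent to the conjunction of: (a) $1$ is semi-simple for $\hat A$, and (b) every other eigenvalue of $\hat A$ has modulus strictly less than $1$. The body of the proof re-expresses (a) and (b) as conditions (i)--(ii).

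For (a): the geometric multiplicity of the eigenvalue $1$ of $\hat A$ is always $1$ --- solving $(\hat A-I)[\mathbf u;\mathbf w]=0$ forces $\mathbf w=\mathbf u$ and $(A-I)\mathbf u=0$, and $1$ is a simple eigenvalue of $A$ under Assumption~\ref{irreducibleAssume} --- whereas its algebraic multiplicity equals the multiplicity of $1$ as a root of $q_1$ (because $q_i(1)=1-\lambda_i\neq0$ for $i\geq2$), i.e.\ $1$ if $\gamma\neq2$ and $2$ if $\gamma=2$. Hence $1$ is semi-simple iff $\gamma\neq2$. The companion root of $q_1$ is $\gamma-1$, an eigenvalue of $\hat A$ that (when $\gamma\neq2$) is $\neq1$ and must therefore satisfy $|\gamma-1|<1$; together with the degenerate cases $\gamma\in\{0,2\}$ this shows that (a) and ``$\gamma-1$ inside the open disk'' hold exactly when $\gamma\in(0,2)$, which is condition (i).

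For (b): fix $i\geq2$, so $\lambda_i\in[\lambda_n,\lambda_2]\subseteq[-1,1)$. By the Schur--Cohn/Jury criterion for a monic real quadratic, both roots of $q_i$ lie in the open unit disk iff $q_i(1)>0$, $q_i(-1)>0$, and $|(\gamma-1)\lambda_i|<1$. Here $q_i(1)=1-\lambda_i>0$ holds automatically since $\lambda_i<1$, and $|(\gamma-1)\lambda_i|\leq|\gamma-1|<1$ holds automatically once condition (i) is in force (as $|\lambda_i|\leq1$); so the whole requirement collapses to $q_i(-1)=1+(2\gamma-1)\lambda_i>0$ for every $i\geq2$. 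As $\lambda\mapsto1+(2\gamma-1)\lambda$ is affine, a sign split on $2\gamma-1$ finishes the argument: for $2\gamma-1\geq0$ the minimum over $i\geq2$ is attained at $\lambda=\lambda_n$, giving $q_n(-1)=2\gamma\lambda_n-\lambda_n+1>0$, i.e.\ condition (ii); for $2\gamma-1<0$ one has $q_i(-1)>1+(2\gamma-1)=2\gamma>0$ for all $i\geq2$ (since $\lambda_i<1$) and also $2\gamma\lambda_n-\lambda_n+1>2\gamma>0$, so both (b) and (ii) then hold automatically. Feeding the two reductions back into Lemma~\ref{SemiConvCriteria} gives the claimed equivalence.

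The main obstacle is less any individual computation than keeping the multiplicity bookkeeping rigorous: justifying (via the characteristic polynomial) that the roots of the $q_i$'s exhaust $\mathrm{spec}(\hat A)$ with the correct multiplicities, and handling the boundary $\gamma=2$, where the two roots of $q_1$ collide at $1$ and semi-simplicity fails --- this is the one place a direct kernel computation is needed. By contrast, reducing the Jury test to the single inequality $q_i(-1)>0$ and running the affine case analysis that singles out $\lambda_n$ is routine, though it is the sole point at which condition (ii) is used, so it deserves a careful writeup.
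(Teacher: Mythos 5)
Your proof is correct, and it reaches the same two reductions as the paper --- semi-simplicity of the eigenvalue $1$ forcing $\gamma\in(0,2)$, and the rest of the spectrum being controlled by a single inequality at $\lambda_n$ --- but by a genuinely different route in the key technical step. The paper invokes Lemma~\ref{LemmaRenCao} (Ren--Cao), which maps the unit-disk root-location problem for Eq.~(\ref{quadraticeqn}) to a left-half-plane problem via a bilinear transformation and then argues through the sum and product of the transformed roots, with a separate case split on whether those roots are real or complex conjugates. You instead apply the Jury/Schur--Cohn test directly to the monic real quadratic $q_i(z)=z^2-\gamma\lambda_i z+(\gamma-1)\lambda_i$: both roots lie in the open unit disk iff $q_i(1)>0$, $q_i(-1)>0$ and $|(\gamma-1)\lambda_i|<1$, of which the first and third are automatic under condition (i), leaving only $q_i(-1)=1+(2\gamma-1)\lambda_i>0$; the affine monotonicity argument with the sign split on $2\gamma-1$ then correctly shows this is equivalent, for all $i\geq2$, to the single condition at $\lambda_n$. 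This avoids the real-versus-complex case distinction entirely and makes transparent why the binding constraint sits at $\lambda_n$. Your other addition --- the factorization $\det(zI_{2n}-\hat A)=\prod_i q_i(z)$, valid because the lower-left block of $zI-\hat A$ is $-I$ and commutes with the lower-right block --- supplies the algebraic-multiplicity bookkeeping that the paper's Lemma~\ref{Eigenpair} (which only exhibits eigenpairs) does not strictly provide, and it cleanly isolates the $\gamma=2$ degeneracy where the two roots of $q_1$ collide at $1$ while your kernel computation shows the geometric multiplicity stays at $1$, exactly as in the paper's part (1). Both arguments are complete; yours trades the external stability lemma for a standard Jury test and is somewhat tighter on multiplicities.
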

 
 \begin{proof}
 The proof has three parts shown below:

 (1) $\hat{A}$ is semi-convergent $\Rightarrow$ $\gamma\in (0,2)$
     
     We prove this by contradiction. Since $\lambda_1= 1$ is an eigenvalue of $A$, from Lemma \ref{Eigenpair}, $\{1, \gamma-1\}$ are eigenvalues for $\hat{A}$ with corresponding eigenvectors $\{[\mathbf{v}_1 ; \mathbf{v}_1], [(\gamma-1) \mathbf{v}_1; \mathbf{v}_1]\}$. Now if $\gamma \in \mathbb{R}\setminus(0,2]$ then $\hat{A}$ has an eigenvalue $\hat{\lambda}_{1-}=\gamma-1 \in \mathbb{R}\setminus(-1,1]$, which is absurd from Lemma \ref{SemiConvCriteria} since $\hat{A}$ is semi-convergent.

     If $\gamma=2$, $\hat{A}$ has an eigenvalue $\hat{\lambda}=1$ with algebraic multiplicity $2$ (from Lemma \ref{Eigenpair}). The associated eigenspace is $E_{1} = \{\hat{\mathbf{v}}| \ \hat{A}\hat{\mathbf{v}} = \hat{\mathbf{v}} \}$. Note that for $\gamma=2$,
\begin{gather}
\hat{A}\hat{\mathbf{v}} = \begin{bmatrix}\begin{array}{c|c}
       2A & -A \\[0.5em]
       \hline
       \mathbb{I}_{n\times n} & \mathbb{0}_{n\times n} 
     \end{array}\end{bmatrix} 
     \begin{bmatrix}
       \hat{\mathbf{v}}_\text{1} \\[0.5em]
       \hat{\mathbf{v}}_\text{2}  
     \end{bmatrix} = 
          \begin{bmatrix}
       \hat{\mathbf{v}}_\text{1} \\[0.5em]
       \hat{\mathbf{v}}_\text{2}  
     \end{bmatrix},
\end{gather}
     which implies $\hat{\mathbf{v}}_{1} = \hat{\mathbf{v}}_{2} = \mathbf{v}$. Thus, $2A\hat{\textbf{v}}_1 - A\hat{\textbf{v}}_2 = A\textbf{v} = \hat{\textbf{v}}_1 = \textbf{v}$. Here $\textbf{v}$ is an eigenvector of A. Since by assumption, the geometric multiplicity of $\lambda=1$ in A is 1, the geometric multiplicity of $\hat{\lambda}=1$ in $\hat{A}$ is also 1. This is absurd since $\hat{A}$ semi-convergent implies $\hat{\lambda}=1$ has same algebraic and geometric multiplicity.
     
    (2) $\hat{A}$ is semi-convergent $\Rightarrow$ $2\gamma\lambda_{n}-\lambda_{n}+1 > 0$

    By Eq.~(\ref{quadraticeqn}) $\text{spec}(\hat{A}) = \{ \hat{\lambda}_{i\pm}| \ \lambda_{i} \in \text{spec}(A) \}$. Since $\hat{A}$ is semi-convergent $|\hat{\lambda}_{i\pm}| \leq 1 \ \forall \ i$. Moreover, since $\hat{\lambda} = 1$ has geometric multiplicity 1 and $\hat{A}$ is semi-convergent, it must be that $\hat{\lambda}_{1}$ has geometric multiplicity 1. Given that $\hat{\lambda}_{1+} = 1$, we conclude that $|\hat{\lambda}_{i\pm}|<1 \ \forall \ i\in \{2,3,\hdots,n\}$. 
    
\begin{lemma}[Lemma 8.5 in \cite{ren2010distributed}]
The polynomial $z^2 + az+ b=0$, where $a, b \in \mathbb{C}$, has roots $|z_{1,2}|<1$ if and only if roots $s_1$ and $s_2$ of $(1+a+b)s^2+2(1-b)s+b-a+1=0$ satisfy $\mathfrak{Re}(s_1) <0$ and $\mathfrak{Re}(s_2) <0$.
\label{LemmaRenCao}

\end{lemma}

    Applying Lemma \ref{LemmaRenCao} to Eq.~(\ref{quadraticeqn}),  $|\hat{\lambda}_{i\pm}| < 1$ if and only if the following Eq.~(\ref{Lemma}) has roots with strictly negative real parts.
    \begin{gather}
    \nonumber
    (1-\gamma\lambda_i+(\gamma-1) \lambda_i)s^2 + 2(1-(\gamma-1) \lambda_i)s + \hdots\\
    \nonumber
    \hdots + ((\gamma-1) \lambda_i + \gamma\lambda_i +1) = 0, \\
    \Rightarrow (1- \lambda_i)s^2 + 2(1-\gamma \lambda_i+\lambda_i)s + (2\gamma\lambda_{i}-\lambda_{i}+1) = 0, \label{Lemma}
    \end{gather}

    For fixed $i \in \{2,3,\hdots,n\}$, let the roots of Eq.~(\ref{Lemma}) be $s_1, s_2 \in \mathbb{C}$. We have the sum of roots of Eq.~(\ref{Lemma}),
    \begin{gather}
    s_1 + s_2 = \frac{-2(1-\gamma \lambda_i+\lambda_i)}{1-\lambda_i} \in \mathbb{R}. \label{sumroots}
    \end{gather}
    Since the sum of roots is real, $s_1$ and $s_2$ are of the form, $s_1 = a_1 + bi$ and $s_2 = a_2 - bi$, where $a_1, a_2, b \in \mathbb{R}$. We then note that,
    \begin{gather}
    a_1 < 0, \ a_2 < 0 \Leftrightarrow c_1: a_1 + a_2 < 0, \ c_2: a_1a_2>0. \label{negrealpart}
    \end{gather}
    We study conditions $c_1$ and $c_2$ independently. Note that for condition $c_1$,
    \begin{gather}
    a_1 + a_2 =  s_1 + s_2 = \frac{-2(1-(\gamma-1) \lambda_i)}{1-\lambda_i} < 0 
    \Leftrightarrow (\gamma-1) \lambda_i < 1. 
    \label{convcrit1}
    \end{gather}
    Since, as proved above,  semi-convergence implies $\gamma \in (0,2)$ and $|\lambda_i| \leq 1$, condition $c_1$ is automatically satisfied $\forall \ i \in \{2,3,\hdots,n\}$. For condition $c_2$, when $b=0$, using the product of roots of Eq.~(\ref{Lemma}), we can write,
    \begin{gather}
    \nonumber
    a_1a_2 =  s_1s_2 = \frac{2\gamma\lambda_{i}-\lambda_{i}+1}{1-\lambda_i} > 0, \\
    \nonumber
    \Leftrightarrow 2\gamma\lambda_{i}-\lambda_{i}+1 > 0, \ 
    \forall \ i \in \{2, \hdots, n\}.
    \end{gather}
    Specifically, for $\lambda_n$, we get,
    \begin{gather}
    2\gamma\lambda_{n}-\lambda_{n}+1  > 0.
    \label{convcrit2}
    \end{gather}    
    When $b\neq0$, $a_1=a_2 \ \Rightarrow a_1 a_2 = a_{1}^{2}\ \geq\ 0$. Since, $a_1 + a_2 < 0$ from condition $c_1$, $a_1\neq0$. Hence, $a_1 a_2 >0$. 
    
(3) The conditions (i) $\gamma \in (0,2)$ and (ii) $2\gamma\lambda_{n}-\lambda_{n}+1 > 0$ $\Rightarrow$ $\hat{A}$ is semi-convergent.

Recall that from Lemma \ref{Eigenpair}, the eigenvalue $\lambda$ of $A$ gets mapped to eigenvalues $\hat{\lambda}_{\pm}$ in $\hat{A}$. Then $\lambda=1$ is mapped to $\{1,  \gamma -1\}$, where $\gamma \in (0, 2)$.

We next show that all the other eigenvalues of $\hat{A}$ are mapped inside a unit disk, which is enough to prove that $\hat{A}$ is semi-convergent by Lemma~\ref{SemiConvCriteria}. Again, applying Lemma \ref{LemmaRenCao} to Eq.~(\ref{quadraticeqn}), we know that $\hat{\lambda}_{i\pm}$ lies within the unit-disk if and only if the roots of Eq. (\ref{Lemma}) lie in the open left half plane. As before, the roots of Eq. (\ref{Lemma}) are in the left half plane if and only if,
\begin{gather}
\label{lhpCriteria_c1}
c_1: -2(1-(\gamma-1) \lambda_i) < 0, \\ 
c_2: (2\gamma-1)\lambda_i+1 > 0.
\label{lhpCriteria_c2}
\end{gather} 

Recall that $\lambda_{i}\in[-1,1)$ for $i \in \{2, \hdots, n\}$, hence for $\gamma \in (0,2)$, criterion $c_1$ holds. For proving $c_2$, consider two cases: (a) $\gamma \in (0, 1)$ and (b) $\gamma \in [1,2)$. 

For case (a), $|2\gamma - 1| < 1$, and since $|\lambda_{i}| \leq 1$, the condition $(2\gamma-1)\lambda_i+1 > 0,$ holds $ \forall \ i \in \{2,\hdots, n\}$. 
For case (b), $2\gamma - 1 >0$ and $f(\lambda_i) \equiv (2\gamma-1)\lambda_i+1$ is a monotonically increasing function. Since, $\lambda_i \geq \lambda_n \ \forall \ i \in \{2,\hdots, n\}$ and $f(\lambda_{n})>0$, we infer that $f(\lambda_{i}) = (2\gamma-1)\lambda_i+1 \geq f(\lambda_{n}) > 0 \ \forall \ i \in \{2,\hdots, n\}$.

Overall, the eigenvalues of $\hat{A}$ are $\{1, \gamma-1 , \hat{\lambda}_{i\pm}\}$, with $\hat{\lambda}_{1+}=1$ simple and $(\gamma-1), \ \hat{\lambda}_{i\pm}$ inside the unit disk $\forall \ i \in \{2, \hdots, n\}$. From Lemma \ref{SemiConvCriteria} $\hat{A}$ is semi-convergent. 
\end{proof}

\subsection{Special case of periodic networks}

Note that Theorem \ref{irreducibleConvCriteria} does not require the network to be aperiodic. If the network is periodic, we get $\lambda_n = -1$ and convergence can be guaranteed by Theorem \ref{irreducibleConvCriteria} for $\gamma \in (0, 1)$. This is in sharp contrast with the DeGroot model where, if the network is periodic, we can always find an initial condition that causes persistent oscillations.

For comparison, we next discuss the accelerated averaging model by Muthukrishnan et al. \cite{muthukrishnan1998first}. Similar to our model, it also takes into account agents' current and previous states, however the update equation in \cite{muthukrishnan1998first} is given by,
\begin{gather}
\textbf{x}(k+1)= \beta A\textbf{x}(k) + (1-\beta) \textbf{x}(k-1), \label{muthuModel}
\end{gather}
where $k\in \mathbb{Z}_{\geq0}, \beta \in \mathbb{R}$ and initial states $\mathbf{x}(0) = \mathbf{x}(-1) = \mathbf{x}_{0}$.

\begin{lemma}
Under Assumptions~\ref{NonegRowStochastic} and~\ref{irreducibleAssume}, if the network associated with $A$ is periodic, there exist initial conditions $\mathbf{x}_0$ s.t. the accelerated averaging model given by Eq.~(\ref{muthuModel}) does not converge.
\end{lemma}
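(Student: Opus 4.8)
The plan is to exploit the single feature that distinguishes a periodic network: its smallest eigenvalue is $\lambda_n = -1$. Under Assumptions~\ref{NonegRowStochastic} and~\ref{irreducibleAssume} the matrix $A$ is real and symmetric, so I may fix a \emph{real} eigenvector $\mathbf{v}_n \neq \mathbf{0}$ with $A\mathbf{v}_n = -\mathbf{v}_n$ (and $\mathbf{v}_n$ is orthogonal to $\mathbf{v}_1$, hence not a consensus vector). I would then take the initial condition $\mathbf{x}_0 = \mathbf{v}_n$, i.e. $\mathbf{x}(0) = \mathbf{x}(-1) = \mathbf{v}_n$, and show that the resulting trajectory of~(\ref{muthuModel}) does not converge.

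First I would collapse the vector recursion to a scalar one. A short induction on $k$ using $A\mathbf{v}_n = -\mathbf{v}_n$ shows $\mathbf{x}(k) = p(k)\,\mathbf{v}_n$, where the scalar sequence satisfies $p(k+1) = -\beta\, p(k) + (1-\beta)\, p(k-1)$ with $p(0) = p(-1) = 1$. Since $\mathbf{v}_n \neq \mathbf{0}$, the trajectory $\mathbf{x}(k)$ converges if and only if the scalar sequence $p(k)$ converges, so it remains to show that $p(k)$ does not converge for any admissible $\beta$.

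Second, I would analyze $p(k)$ through its characteristic polynomial $\mu^2 + \beta\mu - (1-\beta)$, which factors as $(\mu+1)\big(\mu-(1-\beta)\big)$, so $-1$ is a root for \emph{every} value of $\beta$. For $\beta \notin \{0,2\}$ the roots $-1$ and $1-\beta$ are distinct, hence $p(k) = c_1(-1)^k + c_2(1-\beta)^k$; imposing $p(0)=1$ and $p(1) = 1-2\beta$ gives $c_1 = \beta/(2-\beta) \neq 0$, so $p(k)$ retains an undamped $(-1)^k$ term and cannot converge. When $\beta = 2$ there is a double root at $-1$, so $p(k) = (c_1 + c_2 k)(-1)^k$ with $p(0)=1$, which is again non-convergent (in fact unbounded). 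The only excluded value is $\beta = 0$, for which~(\ref{muthuModel}) degenerates to $\mathbf{x}(k+1) = \mathbf{x}(k-1)$ and simply holds the initial state forever; this falls outside the parameter regime of interest and I would just note $\beta \neq 0$. Combining the cases, $\mathbf{x}(k) = p(k)\mathbf{v}_n$ fails to converge, which is the claim.

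I do not anticipate a hard step — the computation is elementary once the scalar reduction is in place. The one point needing genuine care, and effectively the heart of the argument, is confirming that the persistent $(-1)^k$ mode is actually \emph{excited} by the constrained initialization $\mathbf{x}(0) = \mathbf{x}(-1)$: this is exactly the requirement $c_1 \neq 0$, and it is here that the edge values $\beta \in \{1,2\}$ and the degenerate $\beta = 0$ must be handled explicitly rather than glossed over. An alternative, heavier route would parallel Lemma~\ref{Eigenpair}: form the augmented matrix $\left[\begin{smallmatrix} \beta A & (1-\beta)\mathbb{I}_{n\times n} \\ \mathbb{I}_{n\times n} & \mathbb{0}_{n\times n}\end{smallmatrix}\right]$, observe that it inherits the eigenvalue $-1$ from $A$, and then check that the augmented initial state $[\mathbf{v}_n;\mathbf{v}_n]$ has a nonzero component along the corresponding eigenvector (or lies in a size-$2$ Jordan block when $\beta = 2$); but the direct scalar argument above is cleaner and self-contained.
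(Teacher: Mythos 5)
Your proof is correct, and it takes a more hands-on route than the paper. The paper forms the augmented matrix $A_{\beta}$ of Eq.~(\ref{muthuModelMat}), computes its spectrum via Eq.~(\ref{muthuModelEigs}), observes that $\lambda_n=-1$ induces the eigenvalue $\mu=-1$, and concludes non-(semi-)convergence from Lemma~\ref{SemiConvCriteria}, with the existence of a bad initial condition dispatched in one sentence (``select $\mathbf{x}_0$ along the eigenvector corresponding to $-1$''). You instead restrict the dynamics to the $A$-invariant line spanned by $\mathbf{v}_n$, reduce to the scalar recursion $p(k+1)=-\beta p(k)+(1-\beta)p(k-1)$, factor the characteristic polynomial as $(\mu+1)(\mu-(1-\beta))$, and explicitly verify that the $(-1)^k$ mode is excited, i.e.\ $c_1=\beta/(2-\beta)\neq 0$. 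This last check is genuinely valuable and is the point where your argument is tighter than the paper's: the eigenvector of $A_{\beta}$ for $\mu=-1$ has the form $[-\mathbf{v}_n;\mathbf{v}_n]$, which is \emph{not} an admissible initial state since the model forces $\hat{\mathbf{x}}(0)=[\mathbf{x}_0;\mathbf{x}_0]$, so the paper's closing sentence cannot be taken literally; your computation shows that the admissible state $[\mathbf{v}_n;\mathbf{v}_n]$ nonetheless has a nonzero component on the oscillatory mode. What the paper's spectral route buys is uniformity (non-semi-convergence of $A_{\beta}$ as a matrix, independent of initial data) and consistency with the framework of Lemma~\ref{SemiConvCriteria} used throughout; what yours buys is an explicit, self-contained trajectory that demonstrably oscillates, including the double-root case $\beta=2$. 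Your treatment of $\beta=0$ is the right call: there the constrained trajectory is constant (hence trivially convergent, though not to consensus), so that degenerate value must be excluded — a caveat the paper's own proof silently shares.
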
 

\begin{proof}
 The augmented iteration matrix for the model in Eq.~(\ref{muthuModel}) is $A_{\beta} \in \mathbb{R}^{2n\times2n}$, 
\begin{gather}
     A_{\beta}  = 
\begin{bmatrix}\begin{array}{c|c}
       \beta A &  (1-\beta) \mathbb{I}_{n\times n} \\[0.5em]
       \hline
       \mathbb{I}_{n\times n} & \mathbb{0}_{n\times n} 
     \end{array}\end{bmatrix} \in \mathbb{R}^{2n \times 2n}.\label{muthuModelMat}
\end{gather}
Denote the eigenvalues of $A_{\beta}$ by $\mu_{i\pm}$, for $i \in \{1,2,\hdots,n\}$. As illustrated by Eq.~(11.13) in \cite{FB-LNS}, $\mu_{i\pm}$ can be obtained by performing eigenanalysis on $A_{\beta}$:
\begin{gather}
\mu_{i\pm} = \frac{\beta \lambda_{i} \pm \sqrt{\beta^{2} \lambda_{i}^{2} - 4\beta + 4}}{2}.\label{muthuModelEigs}
\end{gather}
The eigenvalues induced by $\lambda_{n} =-1$ are $\mu_{n\pm} = \{-1,\ 1 - \beta\}$. Thus, by Lemma \ref{SemiConvCriteria}, the model does not converge. E.g. Selecting an $\mathbf{x}_{0}$ along the eigenvector corresponding to 
$-1$ leads to persistent oscillations.
\end{proof}

We next numerically compare convergence properties of the MLA model with the DeGroot model and the accelerated averaging model. To do so, we consider a periodic ring network with four nodes, shown in Fig.~\ref{fig:convergenceForPeriodicGraphs}, and numerically simulate the system starting from 1000 randomly chosen initial conditions and plot the time series of the envelope of oscillations relative to the mean. It can be seen that for the chosen undirected, symmetric and periodic network, the DeGroot model as well as the accelerated averaging model oscillate whereas the MLA model reaches a consensus steady state.

\begin{figure}[!ht]
     \centering
        \includegraphics[ trim=60 10 40 30, clip, width=0.4\textwidth]{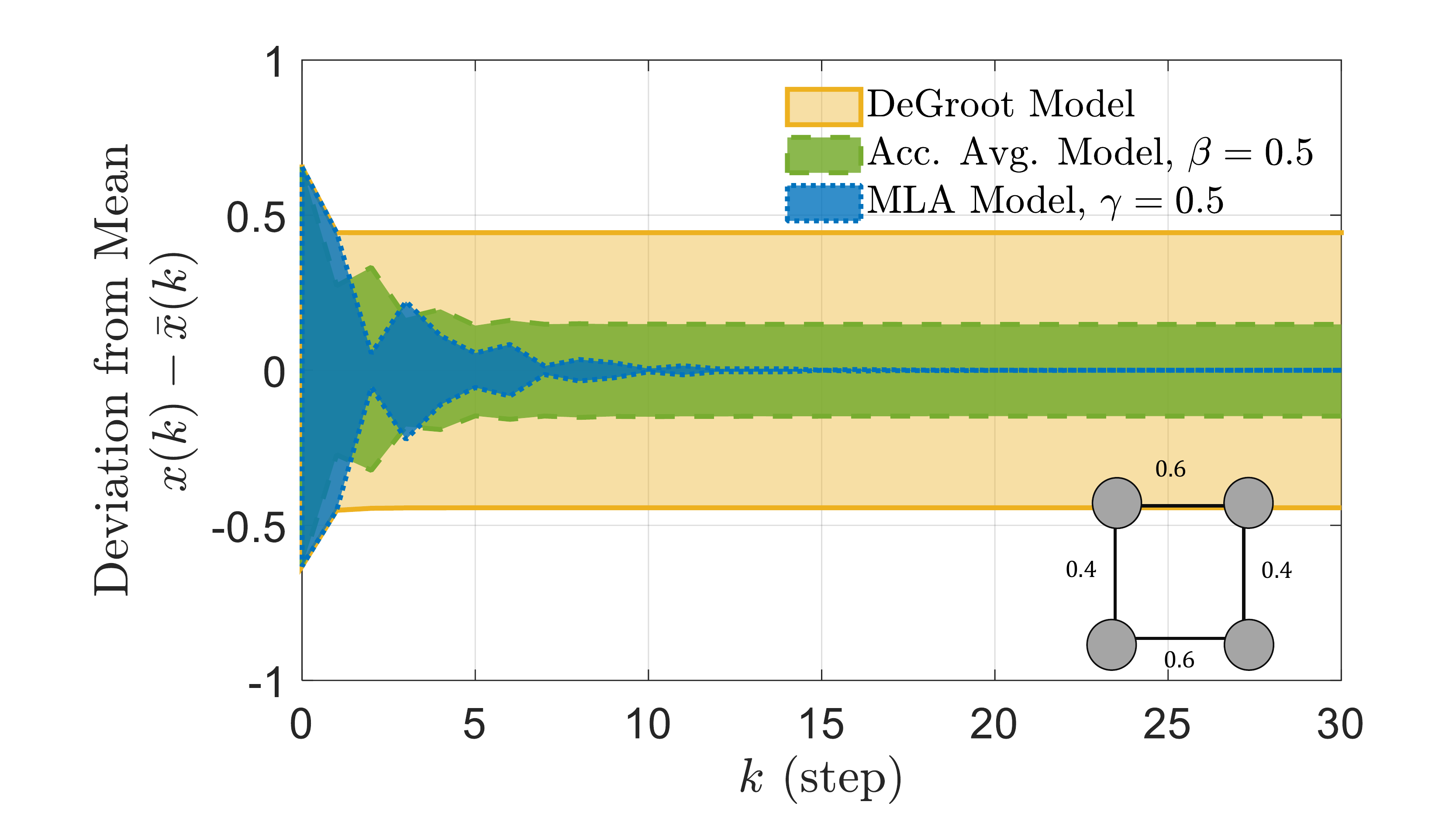}        

         \caption{\small
         Comparison of convergence in three averaging models for a periodic network. The plot shows the envelope enclosed between the maximum and minimum \textit{deviation} from the mean state and is simulated using 1000 randomly chosen initial conditions. In contrast to others, the MLA model converges to consensus.}
         \label{fig:convergenceForPeriodicGraphs}
\end{figure}

\subsection{Value of convergence}

\begin{lemma}
\label{consensus}
Under Assumptions \ref{NonegRowStochastic} and \ref{irreducibleAssume}, if the convergence criteria given in Theorem \ref{irreducibleConvCriteria} are satisfied, the MLA model given by Eq.~(\ref{Model4}), with initial conditions $\hat{\mathbf{x}}(0) = [\mathbf{x}(0); \mathbf{x}(0)]$, converges to a consensus value given by $\mathbf{w}_1^T\mathbf{x}(0)$, where $\mathbf{w}_1$ is left dominant eigenvector of $A$ corresponding to the dominant eigenvalue $\lambda_1=1$, normalized such that $\mathbf{w}_{1}^{T}\mathbb{1}_{n} = 1$.
\end{lemma}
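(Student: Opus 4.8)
The plan is to diagonalize the augmented dynamics using the eigenpairs of $\hat{A}$ from Lemma~\ref{Eigenpair}, identify which eigen-component survives in the limit, and evaluate it on the specific initial condition $\hat{\mathbf{x}}(0) = [\mathbf{x}(0);\mathbf{x}(0)]$. First, since Assumption~\ref{irreducibleAssume} guarantees $A = A^T$ has an orthonormal eigenbasis $\{\mathbf{v}_i\}$ with $\mathbf{v}_1 = \mathbb{1}_n/\sqrt{n}$ (up to scaling) corresponding to $\lambda_1 = 1$, and since the convergence criteria of Theorem~\ref{irreducibleConvCriteria} hold, $\hat{A}$ is semi-convergent: its only unit-modulus eigenvalue is $\hat{\lambda}_{1+} = 1$ (simple), with all others strictly inside the unit disk. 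Hence $\lim_{k\to\infty}\hat{A}^k = \hat{P}$, the spectral projector onto $\mathrm{span}(\hat{\mathbf{v}}_{1+})$ along the span of all other (generalized) eigenvectors, where by Eq.~(\ref{Eigvec}) with $\hat{\lambda}_{1+}=1$ we have $\hat{\mathbf{v}}_{1+} = [\mathbf{v}_1;\mathbf{v}_1]$.

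Next I would write $\hat{P} = \hat{\mathbf{v}}_{1+}\hat{\mathbf{w}}_{1+}^T$ where $\hat{\mathbf{w}}_{1+}$ is the corresponding left eigenvector of $\hat{A}$ for eigenvalue $1$, normalized so $\hat{\mathbf{w}}_{1+}^T\hat{\mathbf{v}}_{1+} = 1$. One computes $\hat{\mathbf{w}}_{1+}$ directly from $\hat{\mathbf{w}}^T\hat{A} = \hat{\mathbf{w}}^T$: writing $\hat{\mathbf{w}} = [\mathbf{w}_{\mathrm{Top}};\mathbf{w}_{\mathrm{Bot}}]$ and using the block form of $\hat{A}$ gives $\gamma A^T\mathbf{w}_{\mathrm{Top}} + \mathbf{w}_{\mathrm{Bot}} = \mathbf{w}_{\mathrm{Top}}$ and $(1-\gamma)A^T\mathbf{w}_{\mathrm{Top}} = \mathbf{w}_{\mathrm{Bot}}$; substituting and using $A^T\mathbf{w}_1 = \mathbf{w}_1$ (since $\lambda_1=1$ and, by symmetry, the left eigenvector equals the right one) yields $\mathbf{w}_{\mathrm{Top}} = c\,\mathbf{w}_1$, $\mathbf{w}_{\mathrm{Bot}} = c(1-\gamma)\mathbf{w}_1$ for a scalar $c$. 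The normalization $\hat{\mathbf{w}}_{1+}^T\hat{\mathbf{v}}_{1+} = c\,\mathbf{w}_1^T\mathbf{v}_1 + c(1-\gamma)\mathbf{w}_1^T\mathbf{v}_1 = c(2-\gamma)\mathbf{w}_1^T\mathbf{v}_1 = 1$ fixes $c$. Then
\begin{gather}
\nonumber
\lim_{k\to\infty}\hat{\mathbf{x}}(k) = \hat{P}\hat{\mathbf{x}}(0) = \hat{\mathbf{v}}_{1+}\,\hat{\mathbf{w}}_{1+}^T\begin{bmatrix}\mathbf{x}(0)\\ \mathbf{x}(0)\end{bmatrix} = \hat{\mathbf{v}}_{1+}\cdot c\big(\mathbf{w}_1^T\mathbf{x}(0) + (1-\gamma)\mathbf{w}_1^T\mathbf{x}(0)\big),
\end{gather}
and the factor $c(2-\gamma) $ cancels against $1/(\mathbf{w}_1^T\mathbf{v}_1)$ from the normalization in a way that, after reading off the top block and using $\mathbf{v}_1 \propto \mathbb{1}_n$, leaves $\mathbf{x}(k) \to (\mathbf{w}_1^T\mathbf{x}(0))\,\mathbb{1}_n$ with the stated normalization $\mathbf{w}_1^T\mathbb{1}_n = 1$.

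An alternative, arguably cleaner route avoids the $2n$-dimensional projector: expand $\mathbf{x}(0) = \sum_i \alpha_i \mathbf{v}_i$ in the eigenbasis of $A$; by linearity the MLA recursion decouples across the $\mathbf{v}_i$ directions into scalar second-order recursions $\xi_i(k+1) = \gamma\lambda_i\xi_i(k) + (1-\gamma)\lambda_i\xi_i(k-1)$ with $\xi_i(0)=\xi_i(-1)=\alpha_i$. For $i\geq 2$ both characteristic roots $\hat{\lambda}_{i\pm}$ lie strictly inside the unit disk (Theorem~\ref{irreducibleConvCriteria}), so $\xi_i(k)\to 0$; for $i=1$, $\lambda_1=1$, the recursion is $\xi_1(k+1) = \gamma\xi_1(k) + (1-\gamma)\xi_1(k-1)$ with roots $\{1,\gamma-1\}$, and with the flat initial data $\xi_1(0)=\xi_1(-1)=\alpha_1$ the solution is constant: $\xi_1(k) \equiv \alpha_1$ for all $k$. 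Hence $\mathbf{x}(k) \to \alpha_1 \mathbf{v}_1$. It remains to evaluate $\alpha_1 \mathbf{v}_1$: since $\mathbf{v}_1 \propto \mathbb{1}_n$ and $\mathbf{w}_1$ is the left eigenvector with $\mathbf{w}_1^T\mathbf{v}_1 = 1$ (orthonormality, or biorthogonality of left/right eigenvectors), projecting $\mathbf{x}(0)=\sum_i\alpha_i\mathbf{v}_i$ onto $\mathbf{w}_1$ gives $\alpha_1 = \mathbf{w}_1^T\mathbf{x}(0)$, so $\alpha_1\mathbf{v}_1 = (\mathbf{w}_1^T\mathbf{x}(0))\mathbf{v}_1$, and normalizing $\mathbf{v}_1 = \mathbb{1}_n$ (equivalently $\mathbf{w}_1^T\mathbb{1}_n=1$) yields the claimed consensus value $(\mathbf{w}_1^T\mathbf{x}(0))\mathbb{1}_n$.

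The main obstacle I anticipate is purely bookkeeping: getting the normalization of $\mathbf{v}_1$ versus $\mathbf{w}_1$ consistent so that the scalar constants genuinely cancel, and making sure the "flat" initial condition $\hat{\mathbf{x}}(0)=[\mathbf{x}(0);\mathbf{x}(0)]$ is what kills the otherwise-present $(\gamma-1)^k$ mode in the $\mathbf{v}_1$ component (a generic augmented initial condition would still converge, since $|\gamma-1|<1$, but the transient would differ) — in fact with flat data that mode has zero coefficient, which is why the limit is reached along the consensus direction with exactly the DeGroot consensus weights. I would present the second (scalar-decoupling) argument as the main proof since it sidesteps computing the $2n$-dimensional left eigenvector and makes the role of the initial condition transparent.
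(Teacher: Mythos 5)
Your proposal is correct, and it actually contains two arguments: the first (spectral projector $\hat{P}=\hat{\mathbf{v}}_{1+}\hat{\mathbf{w}}_{1+}^T$, with $\hat{\mathbf{w}}_{1+}$ computed from the block structure of $\hat{A}$) is essentially the paper's own proof --- the paper writes $\lim_k\hat{A}^k$ via a Jordan decomposition, solves $\hat{\mathbf{w}}^T\hat{A}=\hat{\mathbf{w}}^T$ to get $\hat{\mathbf{w}}_{1}^T=\bigl[\tfrac{1}{2-\gamma}\mathbf{w}_1^T,\ \tfrac{1-\gamma}{2-\gamma}\mathbf{w}_1^T\bigr]$, and evaluates on $[\mathbf{x}(0);\mathbf{x}(0)]$, exactly your computation with $c=1/(2-\gamma)$. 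Your second route (expanding $\mathbf{x}(0)$ in the orthogonal eigenbasis of the symmetric $A$ and decoupling Eq.~(\ref{Model4}) into scalar second-order recursions $\xi_i(k+1)=\gamma\lambda_i\xi_i(k)+(1-\gamma)\lambda_i\xi_i(k-1)$) is genuinely different and, as you say, cleaner: it never touches the $2n\times 2n$ matrix, it sidesteps the question of whether $\hat{A}$ is diagonalizable (it need not be when the discriminant in Eq.~(\ref{quadroots}) vanishes for some $\lambda_i$, a case the paper has to absorb into the Jordan block $B$; your repeated root $r$ with $|r|<1$ still gives $r^k, kr^k\to 0$), and it makes transparent why the flat initial condition kills the $(\gamma-1)^k$ mode in the consensus direction so that $\xi_1(k)\equiv\alpha_1$ exactly rather than only asymptotically. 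What the paper's projector approach buys in exchange is that it does not rely on diagonalizability of $A$ itself, whereas your decoupling needs the eigenbasis of $A$ --- harmless here since Assumption~\ref{irreducibleAssume} makes $A$ symmetric, but worth stating explicitly if you present the scalar argument as the main proof.
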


\begin{proof}
 The final value of the states, $\hat{\textbf{x}}_\text{final}$, is,
\begin{gather}\label{finalVal}
\hat{\textbf{x}}_\text{final} = \lim_{k \to \infty}\hat{\textbf{x}}(k) = \lim_{k \to \infty} \hat{A}^k \hat{\textbf{x}}(0).
\end{gather}

Since $A$ has a simple eigenvalue at $\lambda_1 = 1$ and $\hat{A}$ is semi-convergent, Eq.~(\ref{quadraticeqn}) implies that $\hat{\lambda}_1 =1$ is a simple eigenvalue for $\hat{A}$ (see proof of Theorem~\ref{irreducibleConvCriteria}). Using the Jordan decomposition, we can write,
\begin{gather} \label{jordan}
\lim_{k \to \infty} \hat{A}^k = \lim_{k \to \infty} 
     T
     \begin{bmatrix}\begin{array}{c|c}
       1 & \mathbb{0}_{1 \times 2n-1} \\[0.5em]
       \hline
       \mathbb{0}_{2n-1 \times 1} & B_{2n-1\times 2n-1} 
     \end{array}\end{bmatrix}^{k} 
     T^{-1},
\end{gather}
where, $T = [\hat{\mathbf{v}}_{1\pm}, \hdots, \hat{\mathbf{v}}_{n\pm}]$, $T^{-1} = [\hat{\mathbf{w}}_{1\pm}^T; \hdots; \hat{\mathbf{w}}_{n\pm}^T]$, and $\hat{\mathbf{v}_{i\pm}}, \hat{\mathbf{w}_{i\pm}}$ denote the right and left eigenvectors of $\hat{A} \ \forall \ i\in \{1,\hdots,n\}$, such that $\hat{\mathbf{w}}_{i\pm}^{T}\hat{\mathbf{v}}_{i\pm} = 1, \ \forall \ i \in \{1,2,\hdots, n\}$. $B$ consists of Jordan blocks corresponding to eigenvalues, $\hat{\lambda}_{i\pm}$, satisfying $|\hat{\lambda}_{i\pm}|<1, \ \forall \ i\in \{2,3\hdots, n\}$ and $|\lambda_{i-}|<1$. Thus, we get,
\begin{gather} \label{jordanConverge}
\lim_{k \to \infty} \hat{A}^k = \lim_{k \to \infty} 
     T
     \begin{bmatrix}\begin{array}{c|c}
       1 & \mathbb{0}_{1 \times 2n-1} \\[0.5em]
       \hline
       \mathbb{0}_{2n-1 \times 1} & \mathbb{0}_{2n-1\times 2n-1} 
     \end{array}\end{bmatrix}
     T^{-1}.
\end{gather}
Combining Eqs.~(\ref{finalVal}) and (\ref{jordanConverge}), we get:
\begin{gather} \label{convergedVal}
\hat{\textbf{x}}_\text{final} = \hat{\mathbf{v}}_{1+} \hat{\mathbf{w}}_{1+}^{T} \hat{\mathbf{x}}(0) = (\hat{\mathbf{w}}_{1+}^{T} \hat{\mathbf{x}}(0)) \hat{\mathbf{v}}_{1+} .
\end{gather}
Since $\hat{A}$ is a row-stochastic matrix, $\hat{\mathbf{v}}_{1+} = \mathbb{1}_{2n}$, proving that the agents reach consensus. Define $\hat{\mathbf{w}}_{1+}^{T} = [\mathbf{w}_{1,\text{1}}^{T}, \mathbf{w}_{1,\text{2}}^{T}]$ . Thus, we get the following eigenvalue problem:
\begin{gather} \label{leftEigs}
\begin{bmatrix}
       \mathbf{w}_\text{1, 1}^T, \mathbf{w}_\text{1, 2}^T 
     \end{bmatrix}
     \begin{bmatrix}\begin{array}{c|c}
       \gamma A & (1-\gamma)A \\[0.5em]
       \hline
       \mathbb{I}_{n \times n} & \mathbb{0}_{n \times n} 
     \end{array}\end{bmatrix} 
 = 1 \begin{bmatrix}
       \mathbf{w}_\text{1, 1}^T, \mathbf{w}_\text{1, 2}^T 
     \end{bmatrix}.
\end{gather}
Solving this system of equations yields:
\begin{gather} \label{leftEigs2}
\mathbf{w}_\text{1, 1}^T A = \mathbf{w}_\text{1, 1}^T .
\end{gather}
Thus, $\mathbf{w}_\text{1, 1}$ is a scaled dominant right eigenvector of $A$ i.e. $\mathbf{w}_\text{1, 1}=\zeta \mathbf{w}_\text{1}$ for some $\zeta\in\mathbb{R}\setminus\{0\}$. Combining this with Eq.~(\ref{leftEigs}) yields $\mathbf{w}_\text{1, 2}= (1-\gamma) \mathbf{w}_\text{1, 1} =  \zeta (1-\gamma) \mathbf{w}_\text{1}$. Additionally, by imposing the constraint $\hat{\mathbf{w}}_{1}^{T}\mathbb{1}_{2n\times1} = 1$, we get:
\begin{gather}
\hat{\mathbf{w}}_{1}^{T} = \begin{bmatrix}
      \left(\frac{1}{2-\gamma}\right)\mathbf{w}_{1}^{T}, \ 
      \left(\frac{ 1-\gamma }{2-\gamma}\right) \mathbf{w}_{1}^{T}
     \end{bmatrix}.
\end{gather}
Hence, the final consensus states attained by agents is:
\begin{gather}\label{x_final}
\hat{\mathbf{x}}_{\text{final}} = \begin{bmatrix}
      \left(\frac{1}{2-\gamma}\right)\mathbf{w}_{1}^{T}, \ 
      \left(\frac{ 1-\gamma }{2-\gamma}\right) \mathbf{w}_{1}^{T}
     \end{bmatrix}\
     \begin{bmatrix}
     \mathbf{x}(0) \\
      \mathbf{x}(0)
     \end{bmatrix} \mathbb{1}_{2n} = \mathbf{w}_{1}^{T} \mathbf{x}(0) \mathbb{1}_{2n}.
\end{gather}

\end{proof}

Furthermore, if $A$ is symmetric, $\mathbf{w}_1 = (1/n)\mathbb{1}_n$ and MLA reaches a consensus to the average of the initial states of the agents.

\section{An accelerated route to consensus}

In the previous section we showed that under certain conditions the MLA algorithm converges while the models by DeGroot and Muthukrishnan et al. do not. In this section we show networks and model parameters for which all three models converge but the suggested algorithm leads to faster convergence. To this end we consider a refinement of Assumption \ref{irreducibleAssume} that guarantees convergence for the DeGroot and the accelerated averaging models, as given below.
\begin{assumption} 
\label{BetterThanDGAssumption}
The weighted adjacency matrix $A$ is \textit{symmetric} and \textit{primitive}, that is $\exists \ k \in \mathbb{N}$ such that $A^k$ is positive.
\end{assumption} 

We define the essential spectral radius of a row-stochastic matrix $A$, denoted by $\rho_{\text{ess}}(A)$, as follows \cite{FB-LNS}.
\begin{gather*} \label{rhoessdefn}
    \rho_{\text{ess}}(A)\equiv \begin{cases} 
      0, \hspace{2.5cm} \text{if spec}(A)=\{1,1,\hdots,1\}, \\
      \max\{|\lambda| \ | \ \lambda \in \text{spec}(A) \setminus \{1\} \}, \hspace{0.5cm} \text{otherwise.} 
   \end{cases}
\end{gather*}
The essential spectral radius determines the rate of convergence for linear discrete models. The lower the essential spectral radius the higher the rate of convergence \cite{FB-LNS}.
\subsection{Comparison with the DeGroot model}

\begin{lemma}
\label{BetterThanDG}
Under Assumptions \ref{NonegRowStochastic} and  \ref{BetterThanDGAssumption}, and if $\lambda_2+\lambda_n \neq 0, \ \exists \ \gamma \in \mathbb{R}$ such that $\rho_{ess}(\hat{A}) < \rho_{ess}(A)$. This implies that the MLA model given by Eq.~(\ref{xhat3}) converges to consensus faster than the DeGroot model given by Eq.~(\ref{DeGroot2}).
\end{lemma}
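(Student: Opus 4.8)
The plan is to argue by perturbation around $\gamma = 1$, the value for which the MLA update reduces to DeGroot. Set $\rho := \rho_{\text{ess}}(A) = \max\{|\lambda_2|,|\lambda_n|\}$; since $A$ is primitive we have $\lambda_2 < 1$ and $\lambda_n > -1$, and the hypothesis $\lambda_2 + \lambda_n \neq 0$ forces $\rho > 0$ (otherwise $\lambda_2 = \lambda_n = 0$). By Lemma~\ref{Eigenpair}, $\text{spec}(\hat A) = \{1, \gamma-1\} \cup \{\hat\lambda_{i\pm} : 2 \le i \le n\}$ with $\hat\lambda_{i\pm}$ the roots of $\hat\lambda^2 - \gamma\lambda_i\hat\lambda + (\gamma-1)\lambda_i = 0$; since $\hat A$ is row-stochastic and $1$ stays a simple eigenvalue of $\hat A$ for $\gamma$ near $1$, we get $\rho_{\text{ess}}(\hat A) = \max\big(|\gamma-1|,\ \max_{2 \le i \le n}|\hat\lambda_{i\pm}|\big)$. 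So it suffices to find $\gamma$ near $1$ making every term on the right strictly less than $\rho$.

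First I would record that at $\gamma = 1$ each $\lambda_i$ induces the pair $\{\lambda_i, 0\}$, so $\rho_{\text{ess}}(\hat A)|_{\gamma=1} = \rho$; for every $i$ with $|\lambda_i| < \rho$ both induced eigenvalues have modulus $< \rho$, and by continuity of the roots of a polynomial in its coefficients this persists on a neighbourhood of $\gamma = 1$; likewise $|\gamma - 1| < \rho$ near $\gamma = 1$ since $\rho > 0$. What remains is the branch emanating from the extremal eigenvalue, and this is exactly where the hypothesis enters: $\lambda_2 + \lambda_n \neq 0$ guarantees that $\rho$ equals precisely one of $\lambda_2$ (when $\lambda_2 + \lambda_n > 0$) or $-\lambda_n$ (when $\lambda_2 + \lambda_n < 0$), so only one of $\pm\rho$ is an eigenvalue of $A$, and a one-sided perturbation of $\gamma$ can shrink it without a counterpart growing to meet it.

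Treating the case $\lambda_2 + \lambda_n > 0$ (so $\rho = \lambda_2 > 0$; the other case is symmetric, with $\gamma = 1-\epsilon$ in place of $\gamma = 1+\epsilon$): for an index $i$ with $\lambda_i = \rho$, the quadratic at $\gamma = 1$ has the two simple roots $\rho$ and $0$, so the implicit function theorem yields a smooth branch $\hat\lambda(\gamma)$ with $\hat\lambda(1) = \rho$, and implicit differentiation of $\hat\lambda^2 - \gamma\lambda_i\hat\lambda + (\gamma-1)\lambda_i = 0$ gives $\left.\tfrac{d\hat\lambda}{d\gamma}\right|_{\gamma = 1} = \lambda_i - 1 = \rho - 1 < 0$. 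Hence for $\gamma = 1 + \epsilon$ with $\epsilon > 0$ small this branch is real, positive and $< \rho$, while its companion branch tends to $0$; taking $\epsilon$ small enough to meet all the finitely many neighbourhood conditions at once gives $\rho_{\text{ess}}(\hat A) < \rho = \rho_{\text{ess}}(A)$. To conclude the convergence-rate claim, note that for $\gamma$ near $1$ the conditions of Theorem~\ref{irreducibleConvCriteria} hold ($\gamma \in (0,2)$ and $2\gamma\lambda_n - \lambda_n + 1 \to \lambda_n + 1 > 0$), so $\hat A$ is semi-convergent and $\rho_{\text{ess}}(\hat A)$ is the geometric rate at which $\hat{\mathbf x}(k)$, and hence $\mathbf x(k)$, approaches its limit; by Lemma~\ref{consensus} that limit is the same consensus value $\mathbf w_1^T\mathbf x(0)$ as for DeGroot, so a strictly smaller essential spectral radius means strictly faster convergence.

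The step I expect to be the main obstacle is the bookkeeping around the extremal eigenvalue: making rigorous that $\lambda_2 + \lambda_n \neq 0$ is precisely the condition ensuring that a one-sided perturbation of $\gamma$ strictly decreases the modulus of the dominant non-unit eigenvalue while no other eigenvalue rises to meet $\rho$ — including possibly colliding ones, such as the double root at $0$ produced at $\gamma=1$ by a zero eigenvalue of $A$, which must be controlled by a direct continuity estimate rather than by the implicit-function argument. Everything else is a routine continuity-plus-derivative computation.
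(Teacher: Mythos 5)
Your proposal is correct and takes essentially the same route as the paper: a first-order perturbation $\gamma = 1+\Delta$, the derivative $\left.\tfrac{d\hat\lambda}{d\gamma}\right|_{\gamma=1} = \lambda_i - 1$ (the paper obtains the identical expansion $\hat\lambda_{i\pm}\approx\{\lambda_i+\Delta(\lambda_i-1),\,\Delta\}$ via a binomial approximation of the discriminant), and a choice of the sign of $\Delta$ according to the sign of the unique extremal eigenvalue $\lambda_{\text{ess}}$ singled out by $\lambda_2+\lambda_n\neq 0$. If anything, your handling of the eigenvalues near zero by a plain continuity estimate is more careful than the paper's, whose binomial approximation is not uniform in $\lambda$ near $0$.
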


\begin{proof}
Due to Perron-Frobenius theorem for primitive matrices, (i) $\lambda_1 = 1$ is a simple eigenvalue of $A$, (ii) $|\lambda_i|<1 \ \forall \ i \in \{2,3,\hdots,n\}$, with the convention $1=\lambda_1>\lambda_2\geq \hdots \geq \lambda_n > -1$. For $\gamma = 1 +\Delta$ such that $\Delta \approx 0$, the discriminant in Eq.~(\ref{quadroots}) is approximately equal to $\lambda^2$, and is non-negative. Thus, all eigenvalues of $\hat{A}$ are real i.e. $\hat{\lambda}_{i\pm} \in \mathbb{R} \ \forall \ i \in \{1,2,..,n\}$. Additionally, for $\gamma \approx 1$, the convergence criteria given by Theorem \ref{irreducibleConvCriteria} are satisfied i.e. $A$ is row-stochastic, symmetric, irreducible, $\gamma \in (0,2)$, and $2\gamma\lambda_{n}-\lambda_{n}+1 > 0$. Thus $\hat{A}$ is semi-convergent and reaches consensus. From Eq.~(\ref{quadroots}), we get for $\lambda \neq 0$,
\begin{gather}
\nonumber
\hat{\lambda}_{\pm} = \frac{(1+\Delta)\lambda \pm |\lambda|\sqrt{1 + 2\Delta + \Delta^2 -\frac{4 \Delta}{\lambda} }}{2} 
\end{gather}
Since $\Delta^2 \ll 1$, we ignore this term and use the binomial approximation to get,
\begin{gather}
\hat{\lambda}_{\pm} \approx \frac{(1+\Delta)\lambda \pm |\lambda|\left(1 + \Delta  -\frac{2\Delta}{\lambda} \right)}{2}.
\label{quadroots2}
\end{gather}
We then have,
\begin{gather}
\hat{\lambda}_{i\pm} \approx \left\{\lambda_i + \Delta(\lambda_i - 1), \ \Delta\right\}.
\label{quadroots3}
\end{gather}
Since $\lambda_2 + \lambda_n \neq 0$ there exists a unique eigenvalue $\lambda_{ess}$ such that $\rho_{ess}(A)=|\lambda_{ess}| \in (0,1)$. From Eq.~(\ref{quadroots3}) for $\Delta \approx 0$, we have $\rho_{ess}(\hat{A}) = |\lambda_{ess} + \Delta(\lambda_{ess}-1)|$.
\begin{align}
\nonumber
& \lambda_{ess} \in (0,1) \Rightarrow \ \exists \  \Delta > 0 \ \text{s.t.} \ 0 < \lambda_{ess} + \Delta(\lambda_{ess}-1) < \lambda_{ess}, \\
\nonumber
& \lambda_{ess} \in (-1,0) \Rightarrow \ \exists \  \Delta < 0 \ \text{s.t.} \ 0 > \lambda_{ess} + \Delta(\lambda_{ess}-1) > \lambda_{ess}. \\
& \text{Overall, }  \exists \ \Delta \in \mathbb{R} \ \text{s.t.} \ \rho_{\text{ess}}(\hat{A}) < \rho_{\text{ess}}(A). \label{manymanyinequalities}
\end{align}
\end{proof}
Eq.~(\ref{manymanyinequalities}) is pictorially represented in Fig.~\ref{fig:BetterThanDeGroot}.
The technical assumption $\lambda_2+\lambda_n \neq 0$ is generic and besides pathological cases, will hold for almost all networks. 

\begin{figure}[!ht]
\centering
\begin{subfigure}[b]{0.4\textwidth}         \hspace{0.1cm}\includegraphics[trim= 0 71 25 0, clip,width=\textwidth]{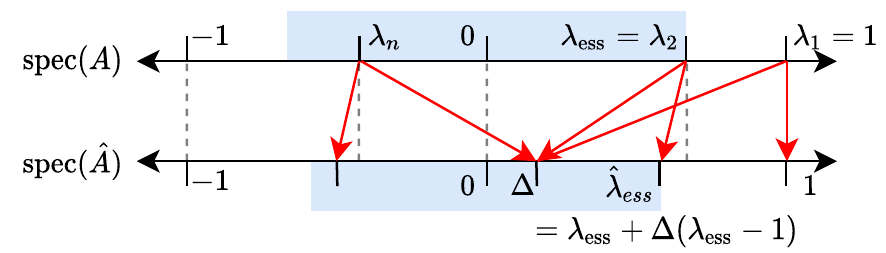}
         \vspace{-0.3in}
         \caption{\small }
     \end{subfigure}
     \begin{subfigure}[b]{0.4\textwidth}
     \hspace{0.1cm}\includegraphics[trim= 0 0 25 71.5, clip,width=\textwidth]{BetterThanDeGroot.pdf}
     \vspace{-0.25in}
     \caption{\small }
     \end{subfigure}
         \caption{\small Eigenvalues of $A$ and their mapping onto those of $\hat{A}$ for $\gamma=1+\Delta$ and $\Delta \approx 0$ for (a) $\lambda_{\text{ess}}=\lambda_2>0$ (b) $\lambda_{\text{ess}}=\lambda_n<0$. The blue shaded regions represent bounds for the non-dominant eigenvalues.}
         \label{fig:BetterThanDeGroot}
\end{figure}
\subsection{Comparison with the accelerated averaging model}

\begin{theorem}  \label{BetterThanAccBullo}
Under Assumptions \ref{NonegRowStochastic}, \ref{BetterThanDGAssumption}, and $\lambda_n<0$, if $\lambda_2 \leq |\lambda_n|/3$, for $\gamma^* = \frac{2}{\rho_{\text{ess}}(A)}\left(\sqrt{1+\rho_{\text{ess}}(A)} - 1\right)$, the essential spectral radius of the MLA model given by Eq.~(\ref{xhat3}) is $\rho_{\text{ess}}(\hat{A}) = \sqrt{1+\rho_{\text{ess}}(A)} - 1$. In this case, the system converges to consensus faster than both the DeGroot model and the accelerated averaging model for any value of the parameter $\beta$.
\end{theorem}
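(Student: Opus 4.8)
The plan is to reparametrize through $\rho:=\rho_{\text{ess}}(A)$ and $r:=\sqrt{1+\rho}-1$, exploiting $\rho=r(r+2)$ and $1+\rho=(1+r)^{2}$, which recast $\gamma^{*}$ into the clean form $\gamma^{*}=\tfrac{2}{r+2}$ (so $1-\gamma^{*}=\tfrac{r}{r+2}$ and, crucially, $\gamma^{*2}\rho=4(1-\gamma^{*})$). Since $A$ is symmetric and primitive, Assumption~\ref{BetterThanDGAssumption} gives Assumption~\ref{irreducibleAssume} together with $\lambda_{n}\in(-1,0)$, so $\rho=|\lambda_{n}|\in(0,1)$ — the hypothesis $\lambda_{2}\le|\lambda_{n}|/3$ ensures the essential spectral radius of $A$ is attained at $\lambda_{n}$, not at $\lambda_{2}$. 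As $r>0$ we have $\gamma^{*}\in(0,1)$, so Theorem~\ref{irreducibleConvCriteria} applies (its condition (ii) being automatic for $\gamma\in(0,1)$, cf.\ its proof), hence $\hat{A}$ is semi-convergent and the MLA model reaches consensus.

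The core step is to show $\rho_{\text{ess}}(\hat{A})=r$. By Lemma~\ref{Eigenpair}, $\text{spec}(\hat{A})=\{1,\gamma^{*}-1\}\cup\{\hat{\lambda}_{i\pm}\}_{i\ge 2}$, with $\hat{\lambda}_{i\pm}$ the roots of $\hat{\lambda}^{2}-\gamma^{*}\lambda_{i}\hat{\lambda}+(\gamma^{*}-1)\lambda_{i}=0$, discriminant $D_{i}=\lambda_{i}\!\left(\gamma^{*2}\lambda_{i}-4(\gamma^{*}-1)\right)$, and product of roots $(\gamma^{*}-1)\lambda_{i}$. I would bound every non-unit magnitude by $r$: (a) $|\gamma^{*}-1|=\tfrac{r}{r+2}<r$; (b) for $\lambda_{i}\in[\lambda_{n},0]$, $D_{i}\le 0$ with $D_{n}=0$ precisely because $\gamma^{*2}\rho=4(1-\gamma^{*})$, so $|\hat{\lambda}_{i\pm}|=\sqrt{(\gamma^{*}-1)\lambda_{i}}=\sqrt{(1-\gamma^{*})|\lambda_{i}|}\le\sqrt{(1-\gamma^{*})\rho}=r$, with equality iff $\lambda_{i}=\lambda_{n}$ (where the repeated root is $-\gamma^{*}\rho/2=-r$); (c) for $\lambda_{i}\in(0,\lambda_{2}]$, $D_{i}>0$, the roots are real of opposite sign, and the larger one $\hat{\lambda}_{i+}=\tfrac{\gamma^{*}\lambda_{i}+\sqrt{D_{i}}}{2}$ satisfies $\hat{\lambda}_{i+}\le r\iff\lambda_{i}\le\tfrac{r(r+2)}{3}=\tfrac{\rho}{3}$ — exactly the role of the constant $3$ in the hypothesis — which holds since $\lambda_{i}\le\lambda_{2}$ (and case (c) is vacuous if $\lambda_{2}\le 0$). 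Hence $\rho_{\text{ess}}(\hat{A})=r=\sqrt{1+\rho_{\text{ess}}(A)}-1$, attained at the images of $\lambda_{n}$.

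For the rate comparisons: versus DeGroot, $r=\tfrac{\rho}{1+\sqrt{1+\rho}}<\rho=\rho_{\text{ess}}(A)$, so the MLA model is faster. Versus the accelerated averaging model, fix any $\beta$ for which it converges; then $\beta\in(0,2)$, since $\beta-1\in\text{spec}(A_{\beta})$ and $\beta=2$ fails as in the proof of Theorem~\ref{irreducibleConvCriteria}. Using Eq.~(\ref{muthuModelEigs}) and $\mu_{n+}\mu_{n-}=\beta-1$, I would show $\rho_{\text{ess}}(A_{\beta})\ge|\mu_{n\pm}|\ge\tfrac{\rho}{1+\sqrt{1-\rho^{2}}}$: letting $\beta_{-}=\tfrac{2(1-\sqrt{1-\rho^{2}})}{\rho^{2}}$ be the smaller root of $\beta^{2}\rho^{2}-4\beta+4$, when this quantity is $\le 0$ the images of $\lambda_{n}$ are conjugate (or a repeated real at $\beta_{-}$) with $|\mu_{n\pm}|=\sqrt{\beta-1}\ge\sqrt{\beta_{-}-1}=\beta_{-}\rho/2$, and when it is $>0$ the dominant real image has magnitude $h(\beta)=\tfrac{\beta\rho+\sqrt{\beta^{2}\rho^{2}-4\beta+4}}{2}$, which is decreasing on $(0,\beta_{-})$ (an inequality equivalent to $\rho^{2}<1$), hence $\ge h(\beta_{-})=\beta_{-}\rho/2=\tfrac{\rho}{1+\sqrt{1-\rho^{2}}}$. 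Since $\sqrt{1-\rho^{2}}<\sqrt{1+\rho}$ for $\rho\in(0,1)$, $\tfrac{\rho}{1+\sqrt{1-\rho^{2}}}>\tfrac{\rho}{1+\sqrt{1+\rho}}=r=\rho_{\text{ess}}(\hat{A})$, so $\rho_{\text{ess}}(A_{\beta})>\rho_{\text{ess}}(\hat{A})$ and the MLA model is strictly faster; for $\beta\notin(0,2)$ the accelerated averaging model does not converge at all, making the comparison trivial.

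The delicate points are (i) verifying that $\gamma^{*}$ is exactly the parameter making $D_{n}=0$, so the images of the worst eigenvalue $\lambda_{n}$ land precisely on $\{|z|=r\}$, and (ii) the optimality of $\beta_{-}$ for the accelerated averaging model, i.e.\ the monotonicity of $h$ on $(0,\beta_{-})$ — both reduce to short algebraic identities, but the whole comparison hinges on them. I would also stress that $\lambda_{2}\le|\lambda_{n}|/3$ is the exact threshold needed for case (c) in the second paragraph, so it cannot be relaxed without changing the achieved essential spectral radius.
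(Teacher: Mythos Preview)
Your proof is correct and follows essentially the same route as the paper's: verify convergence via Theorem~\ref{irreducibleConvCriteria}, identify $\gamma^{*}$ as the parameter making the discriminant $D(\lambda_{n},\gamma^{*})$ vanish, split the non-unit eigenvalues of $\hat{A}$ into those arising from $\lambda_{i}\le 0$ (complex images with modulus $\sqrt{(1-\gamma^{*})|\lambda_{i}|}$) and from $\lambda_{i}>0$ (real images dominated by $\hat{\lambda}_{2+}$), and then derive the threshold $\lambda_{2}\le|\lambda_{n}|/3$ as precisely the condition under which $|\hat{\lambda}_{2+}|\le|\hat{\lambda}_{n\pm}|$. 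Your reparametrization $r=\sqrt{1+\rho}-1$, $\gamma^{*}=2/(r+2)$ streamlines the algebra (the paper carries the square roots through), but the logical skeleton is identical.

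The one place you diverge is in the comparison with the accelerated averaging model. The paper simply quotes the optimized rate $\min_{\beta}\rho_{\text{ess}}(A_{\beta})=\rho/(1+\sqrt{1-\rho^{2}})$ from \cite{FB-LNS} and then checks $\sqrt{1+\rho}-1<\rho/(1+\sqrt{1-\rho^{2}})$. You instead rederive this bound from scratch: you lower-bound $\rho_{\text{ess}}(A_{\beta})$ by $|\mu_{n\pm}|$, split on the sign of $\beta^{2}\rho^{2}-4\beta+4$, and establish the monotonicity of $h(\beta)$ on $(0,\beta_{-})$ (which you correctly reduce to $\rho^{2}<1$). This makes your argument self-contained where the paper relies on an external reference; the price is a page of extra computation, but nothing structural changes.
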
 

\begin{proof} 
We will prove this in two parts: first, we show that $\gamma^{*}$ satisfies the convergence criteria in Theorem~\ref{irreducibleConvCriteria}. Second, we compute the convergence rate of the MLA model.

\textit{Part 1}: Under the given assumption, the essential spectral radius of $A$, denoted by  $\rho_{\text{ess}}(A)$, is given by,
\begin{gather}
\rho_{\text{ess}}(A) = -\lambda_n \in (0,1). \label{rhoessA}
\end{gather}
 
We need to verify that $\gamma^{*} \in (0,2)$. 
\begin{gather}
    \sqrt{1+\rho_{ess}(A)}-1 > 0
    \Rightarrow \gamma^*=\frac{2}{\rho_{ess}(A)}\left(\sqrt{1+\rho_{ess}(A)}-1\right)>0
    \label{alphastar1}
\end{gather}

Additionally, 
\begin{align}
\nonumber
&\gamma^*=\frac{2}{\rho_{ess}(A)}\left(\sqrt{1+\rho_{ess}(A)}-1\right)<1 ,\\
    \nonumber
    &\Leftrightarrow 1+ \rho_{ess}(A) < 1+\frac{\rho_{ess}(A)^2}{4} +\rho_{ess}(A), \\
    &\Leftrightarrow \rho_{ess}(A)^2>0.
    \label{alphastar2}
\end{align}
Hence $\gamma^* \in (0,1) \subset (0,2)$

To show that $\gamma^*$ satisfies Theorem~\ref{irreducibleConvCriteria} (ii) we start noting that $\gamma^* \in (0,1)$, as proven before and  $\rho_{ess}(A)>0$, hence
\begin{align}
    \nonumber
    0 < \gamma^* < 1 \Rightarrow \ 2\rho_{ess}(A) > 2\rho_{ess}(A)(1-\gamma^*) > 0, \\
    \nonumber
    \Rightarrow \rho_{ess}(A)+1 > 2\rho_{ess}(A)(1-\gamma^*)-\rho_{ess}(A)+1 > \hdots \\
    \hdots > -\rho_{ess}(A)+1 >0.
\end{align}

Thus, $\lambda_n = -\rho_{ess}(A) \Rightarrow 2\gamma^{*}\lambda_n-\lambda_n + 1 > 0$, as desired for convergence.
Note that the value of $\gamma^*$ is obtained by setting the discriminant in Eq.~(\ref{quadroots}) to zero i.e. $D(\lambda_n, \gamma^*)=0$. From this expression we also get,
\begin{gather}
\ \lambda_n = \frac{4(\gamma^*-1)}{\gamma^{*2}}, \ \text{since} \ \lambda_n < 0.
\label{alphastar02}
\end{gather}
A brief explanation for the choice of $D(\lambda_n,\gamma^*)=0$ is given here. To achieve fastest convergence, we choose a value of the parameter $\gamma$ that minimizes the essential spectral radius of $\hat{A}$. For a chosen value of $\gamma$, the essential spectral radius would equal $\hat{\Lambda}(\lambda, \gamma)=\max\{|\hat{\lambda}_{\pm}(\lambda, \gamma)|\}$, corresponding to a certain value of $\lambda$. A contour plot for the function $\hat{\Lambda}(\lambda, \gamma)$ is given in Fig.~\ref{fig:Dzero}. The solid black curves correspond to the discriminant function, $D(\lambda, \gamma) = 0$, and they divide the parameter space into four quadrants labelled $I-IV$. For each quadrant the sign of $D(\lambda, \gamma)$ is labelled. This sign and that of the parameters, $\lambda$ and $\gamma$, determine which of the two functions $\hat{\lambda}_{\pm}$ has a larger absolute value. For quadrants $I$ and $III$, $D(\lambda, \gamma)<0$ and $\hat{\Lambda}(\lambda, \gamma)=|\hat{\lambda}_{+}(\lambda, \gamma)|=|\hat{\lambda}_{-}(\lambda, \gamma)|=\sqrt{(\gamma-1)\lambda}$. For quadrant $II$, $\hat{\Lambda}(\lambda, \gamma)=|\hat{\lambda}_{+}(\lambda, \gamma)|$. For quadrant $IV$, $\hat{\Lambda}(\lambda, \gamma)=|\hat{\lambda}_{-}(\lambda, \gamma)|$. Furthermore, the directions of monotonic decrease of $\hat{\Lambda}(\lambda, \gamma)$ is indicated by the arrows in the contour plot. Analytically, these directions are determined by the signs of the partial derivatives $\partial \hat{\Lambda}/\partial \lambda$ and $\partial \hat{\Lambda}/\partial \gamma$, where the expression for the function, $\hat{\Lambda}(\lambda, \gamma)$, is known in each quadrant. Let the matrix $A$ have eigenvalues $\{\lambda_1, \lambda_2, \hdots, \lambda_n\}$. From the monotonicity of the function in the contour plot, if $\hat{\Lambda}(\lambda_n, \gamma^*) \geq \hat{\Lambda}(\lambda_2, \gamma^*)$, where $\gamma^*$ is obtained from $D(\lambda_n, \gamma^*)=0$, the parameter value $\gamma^*$ minimizes the function $\hat{\Lambda}(\lambda, \gamma)$ and the thus the essential spectral radius of $\hat{A}$. Thus, we set $D(\lambda_n, \gamma^*)=0$ in order to evaluate the optimizing parameter $\gamma^*$. 

\begin{figure}[!ht]
    \centering
    \includegraphics[width=0.45\textwidth]{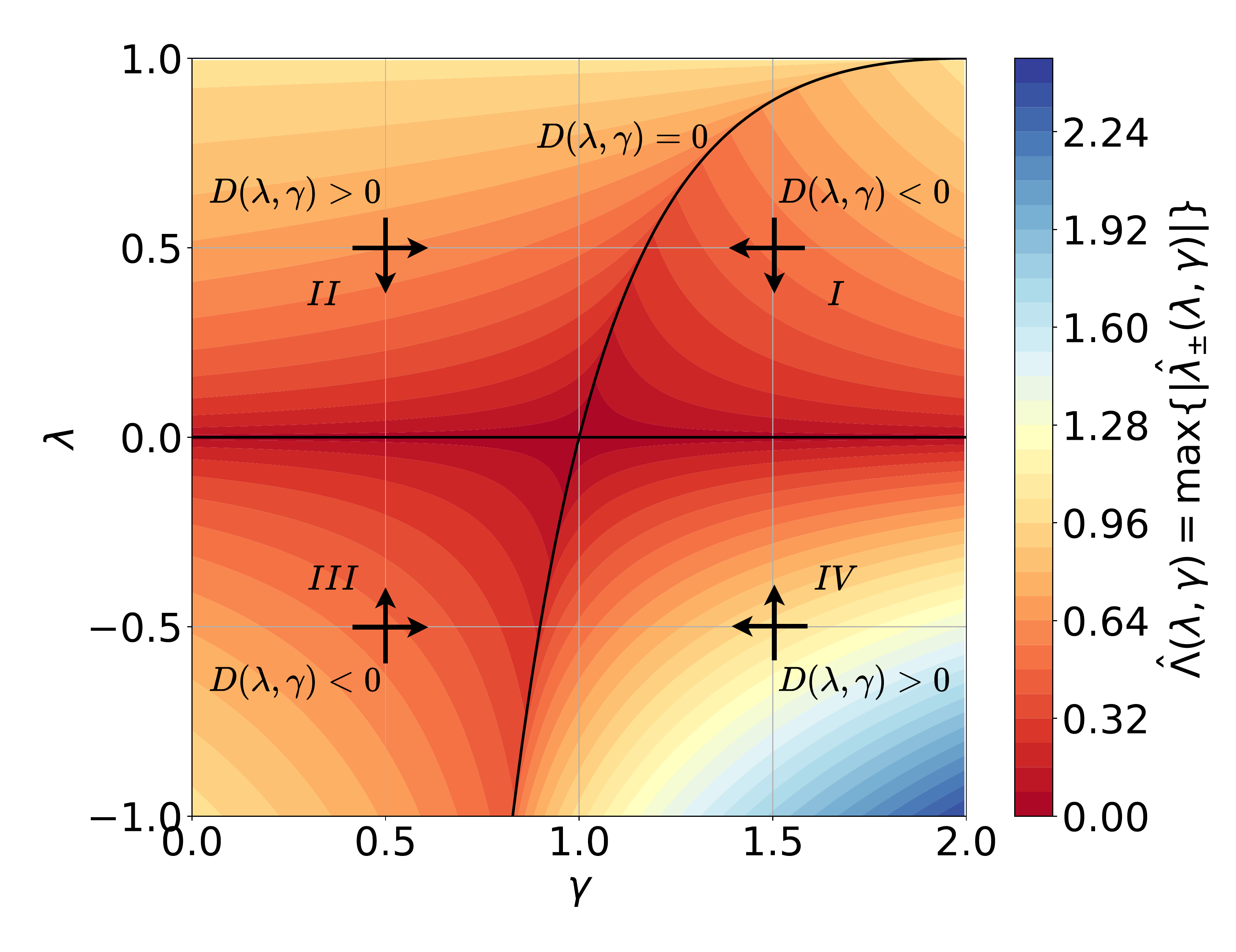} 
    \caption{\small Contour plot of the function $\hat{\Lambda}(\lambda, \gamma)=\max\{|\hat{\lambda}_{\pm}(\lambda, \gamma)|\}$. The solid black curves correspond to the discriminant function, $D(\lambda, \gamma) = 0$, and they divide the parameter space into four quadrants labelled $I-IV$. For each quadrant, the directions of monotonic decrease of the function, $\hat{\Lambda}(\lambda, \gamma)$, and the sign of the discriminant function, $D(\lambda, \gamma)$, are also given.}
    \label{fig:Dzero}  
\end{figure}
\textit{Part 2}: For the convergence rate, we evaluate the essential spectral radius of $\hat{A}$, denoted by $\rho_{\text{ess}}(\hat{A})$, corresponding to $\gamma = \gamma^*$. Note that for $\gamma^* \in (0,1)$, 
\begin{gather}
D(\lambda_n, \gamma^*)=0 \Rightarrow \begin{cases}  D(\lambda_i,\gamma^*) \leq 0 \ \forall \ \lambda_i \in [\lambda_n, 0], \\ D(\lambda_i,\gamma^*) > 0 \ \forall \ \lambda_i \in (0,1]. \end{cases}\label{Dlessthan0}
\end{gather}
We examine the ordering in modulus of eigenvalues of $\hat{A}$. For $\gamma=\gamma^*$, and for some $a$ and $b$, consider two scenarios: (i) $1\geq \lambda_a > \lambda_b > 0$ (ii) $\lambda_n \leq \lambda_a < \lambda_b \leq 0$. In (i), the expression for $\hat{\lambda}_{+}$, given in Eq.~(\ref{quadroots}) is monotonically increasing which implies $ \hat{\lambda}_{a+} > \hat{\lambda}_{b+} > 0$. In (ii), the modulus of complex roots of Eq.~(\ref{quadraticeqn}), is $|(\gamma^{*}-1)\lambda|$ which implies $|\hat{\lambda}_{a\pm}| > |\hat{\lambda}_{b\pm}| > 0$. Thus, for the essential spectral radius of $\hat{A}$, we then have,
\begin{gather}
\rho_{\text{ess}}(\hat{A}) = \max \{|\hat{\lambda}_{1-}|, |\hat{\lambda}_{n\pm}|, |\hat{\lambda}_{2+}| \}, 
\end{gather}
where $|\hat{\lambda}_{1-}|= 1-\gamma^*$, $|\hat{\lambda}_{n\pm}| = \sqrt{(\gamma^*-1)\lambda_n} = -\gamma^*\lambda_n/2$ and $|\hat{\lambda}_{2+}| = \left(\gamma^*\lambda_2 + \sqrt{\gamma^{*2} \lambda_2^2 -4 (\gamma^*-1) \lambda_2}\right)/2$. We show that $|\hat{\lambda}_{n\pm}|> |\hat{\lambda}_{1-}|$. In fact, for  $\gamma^* \in (0,1)$ and using the value of $\lambda_n$ from Eq.~(\ref{alphastar02}), we arrive at the following statements,  
\begin{align} 
\nonumber
\sqrt{(\gamma^*-1)\lambda_n} > |\gamma^*-1|
\nonumber
&\Leftrightarrow \sqrt{(\gamma^*-1)\left(\frac{4(\gamma^*-1)}{\gamma^{*2}}\right)} > |\gamma^*-1|, \\
& \Leftrightarrow -2 < \gamma^* < 2.
\label{rhoessopt}
\end{align}

 which clearly holds $\forall \ \gamma^* \in (0,1)$. Thus, $\rho_{\text{ess}}(\hat{A}) = \max \{|\hat{\lambda}_{n\pm}|, |\hat{\lambda}_{2+}| \}$. We next show that $\rho_{\text{ess}}(\hat{A}) = |\hat{\lambda}_{n\pm}|$ if and only if
\begin{align}
\nonumber
&|\hat{\lambda}_{n\pm}| \geq |\hat{\lambda}_{2+}|,\\
\nonumber
 & \Leftrightarrow -\frac{\gamma^*\lambda_n}{2} \geq \frac{\gamma^*\lambda_2 + \sqrt{\gamma^{*2} \lambda_2^2 -4 (\gamma^*-1) \lambda_2}}{2}, \\
\nonumber
\nonumber
&\Leftrightarrow -\lambda_n - \lambda_2 \geq   \sqrt{\lambda_{2}^{2} - \lambda_{n}\lambda_{2}},\\
&\Leftrightarrow c_1: \lambda_n + \lambda_2  \leq 0 \ \text{and} \ c_2: |\lambda_n + \lambda_2| \geq  \sqrt{\lambda_{2}^{2} - \lambda_{n}\lambda_{2}}. \label{rhoEssEqn}
\end{align}
Condition $c_1$ holds for $\lambda_n <0$ and $|\lambda_n|\geq |\lambda_2|$. Condition $c_2$ in Eq.~(\ref{rhoEssEqn}) can be further simplified as shown below:
\begin{align}
\nonumber
c_2 &\Leftrightarrow (\lambda_n + \lambda_2)^2 \geq  \lambda_{2}^{2} - \lambda_{n}\lambda_{2}, \\
\nonumber
     & \Leftrightarrow \lambda_n(\lambda_n + 3\lambda_2) \geq 0,\\
     \nonumber
    & \Leftrightarrow \lambda_n \leq -3\lambda_2, \ \text{since} \ \lambda_n < 0, \\
    &  \Leftrightarrow \lambda_2 \leq \frac{|\lambda_n|}{3}. \label{most_restrictive}
\end{align}
Overall, the expression for the spectral radius of $\hat{A}$ is then given by,
\begin{gather}
\rho_{\text{ess}}(\hat{A})  =|\hat{\lambda}_{n\pm}|=-\gamma^*\lambda_n/2. \label{rho_ess_ahat_00}
\end{gather}
Substituting Eq.~(\ref{rhoessA}) and $\gamma^*$ from Theorem \ref{BetterThanAccBullo} in Eq.~(\ref{rho_ess_ahat_00}), we get:
\begin{gather}
    \rho_{\text{ess}}(\hat{A}) = \sqrt{1+\rho_{\text{ess}}(A)} - 1. \label{rhoessAhat2}
\end{gather}
The optimized essential spectral radius of the accelerated averaging model given by Eq.~(\ref{muthuModel}) is given by \cite{FB-LNS},
\begin{gather}
\min_{\beta\in(0,2)}\rho_{\text{ess}}(A_\beta)=\rho_{\text{ess}}(A_{\beta^*})= \frac{\rho_{\text{ess}}(A)}{1+\sqrt{1-\rho_{\text{ess}}^2(A)}} < \rho_{\text{ess}}(A), \label{accBullo}
\end{gather}
for $\rho_{\text{ess}}(A) \in (0,1)$.
Using further algebraic manipulations it can then be shown that, for $\rho_{\text{ess}}(A) \in (0,1)$, MLA converges faster than the accelerated averaging model and DeGroot since,
\begin{gather}
\rho_{\text{ess}}(\hat{A}) = \sqrt{1+\rho_{\text{ess}}(A)} - 1 <\rho_{\text{ess}}(A_{\beta^*}) < \rho_{\text{ess}}(A). \label{finalineq}
\end{gather}
\end{proof} 
\vspace{-0.5cm}
\begin{figure}[!ht]
    \centering
    \includegraphics[clip, width=0.3\textwidth]{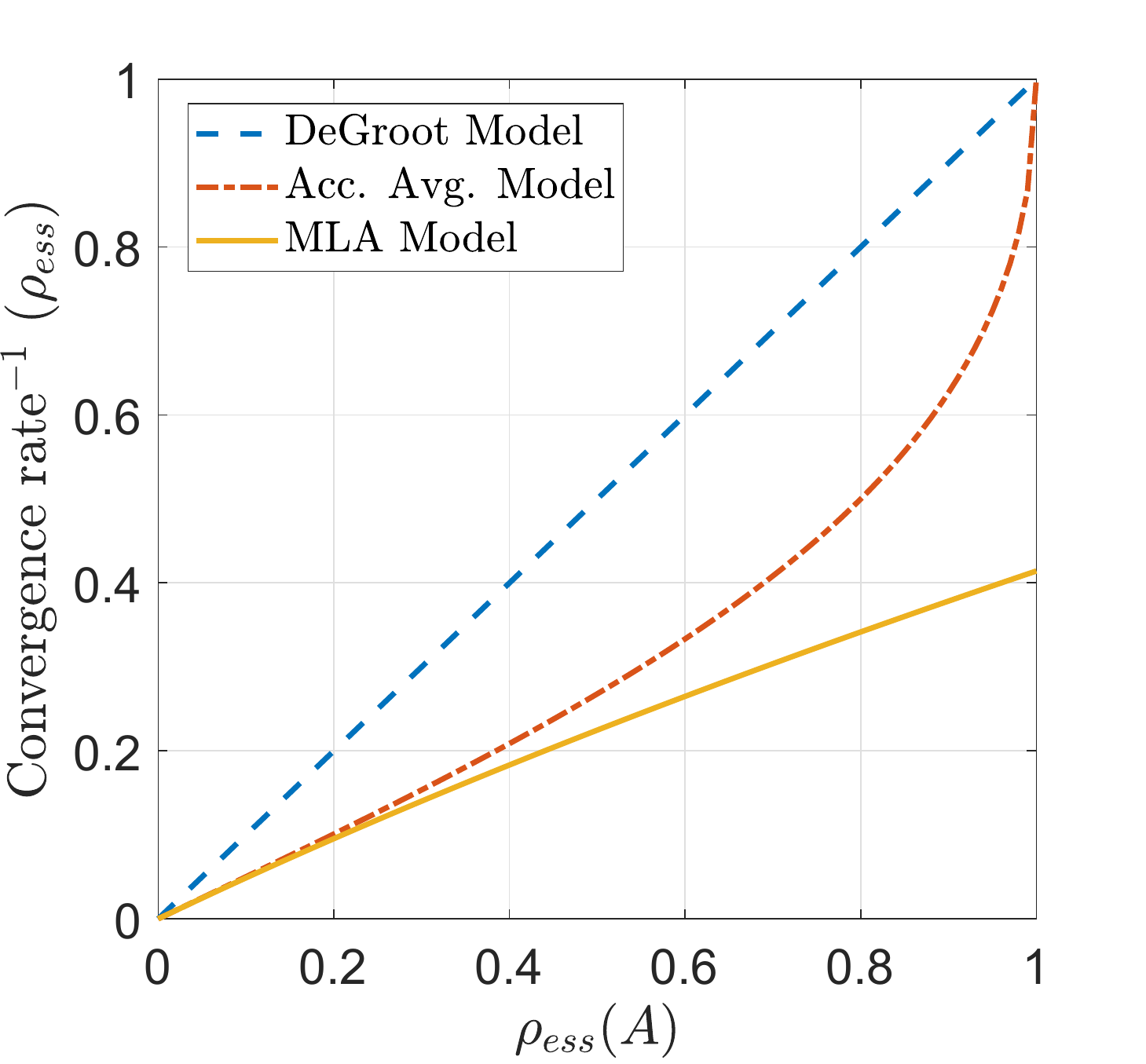} 
    \caption{\small Comparison of essential spectral radii and the optimal rates of convergence of the three models of network dynamics; DeGroot model, accelerated averaging model, and the MLA model under the criteria given in Theorem \ref{BetterThanAccBullo}. MLA model yields faster convergence.}
    \label{fig:rateOfConvergence}  
\end{figure}

Eq.~(\ref{finalineq}) is plotted in Fig.~\ref{fig:rateOfConvergence}. It can be noted that for a network that satisfies the criteria given in Theorem \ref{BetterThanAccBullo} with an essential spectral radius close to unity, the MLA model reaches convergence value significantly faster than the classic DeGroot model and the accelerated averaging model. Good candidates for such matrices are perturbations of a periodic network. A schematic for such networks is shown in Fig.~\ref{fig:AlmostPeriodic}. In such networks, $\lambda_n >-1$ but such that $\rho_{ess}(A) \approx 1$. One such perturbation is shown in Fig.~\ref{fig:fastestCoverge} which is obtained by adding self-loops of low weight in the 4-node ring network. The corresponding time series for the three models with optimized parameters are also shown.
\begin{figure}
     \centering

       \vspace{-0.3cm} \includegraphics[trim= 0 35 0 5, clip, width=0.35\textwidth]{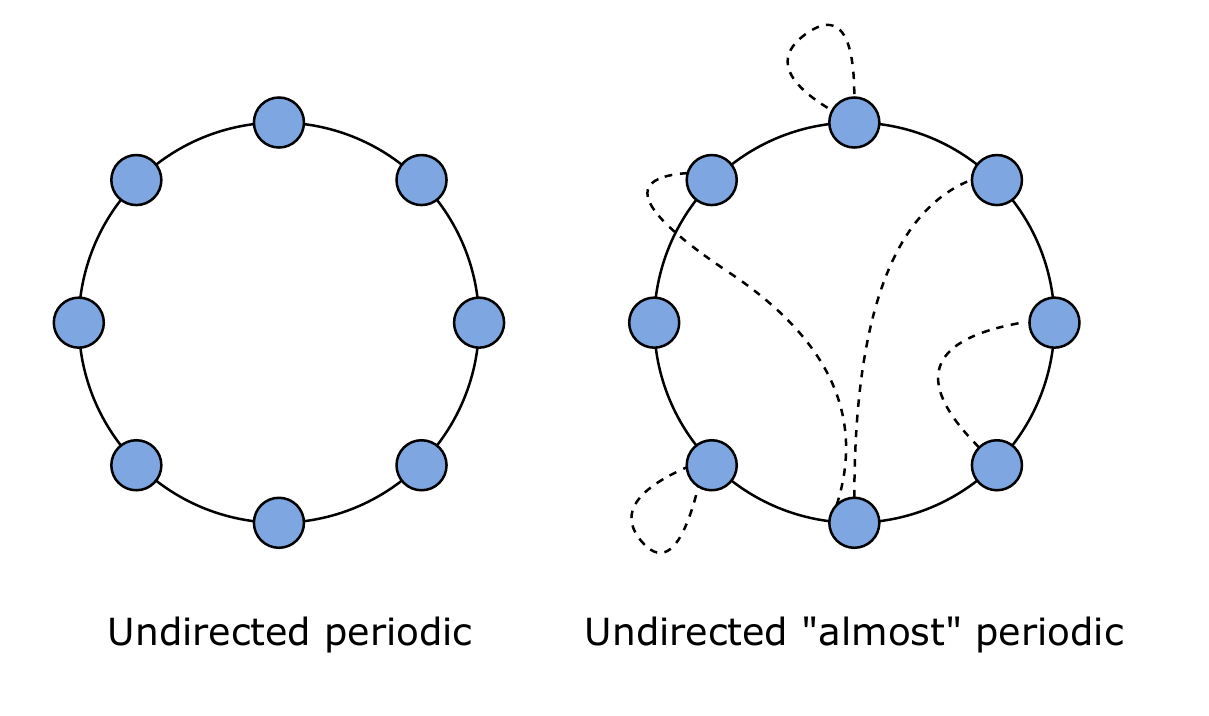}
         \caption{\small A periodic ring network is shown on the left. Network links with low weights, represented by the dashed curves, are added to this periodic network while maintaining row-stochasticity of the corresponding adjacency matrix to obtain candidates for networks that satisfy criteria given in Theorem~\ref{BetterThanAccBullo}.}
         \label{fig:AlmostPeriodic}
\end{figure}

\begin{figure}
     \centering

        \includegraphics[trim= 50 10 40 30, clip, width=0.4\textwidth]{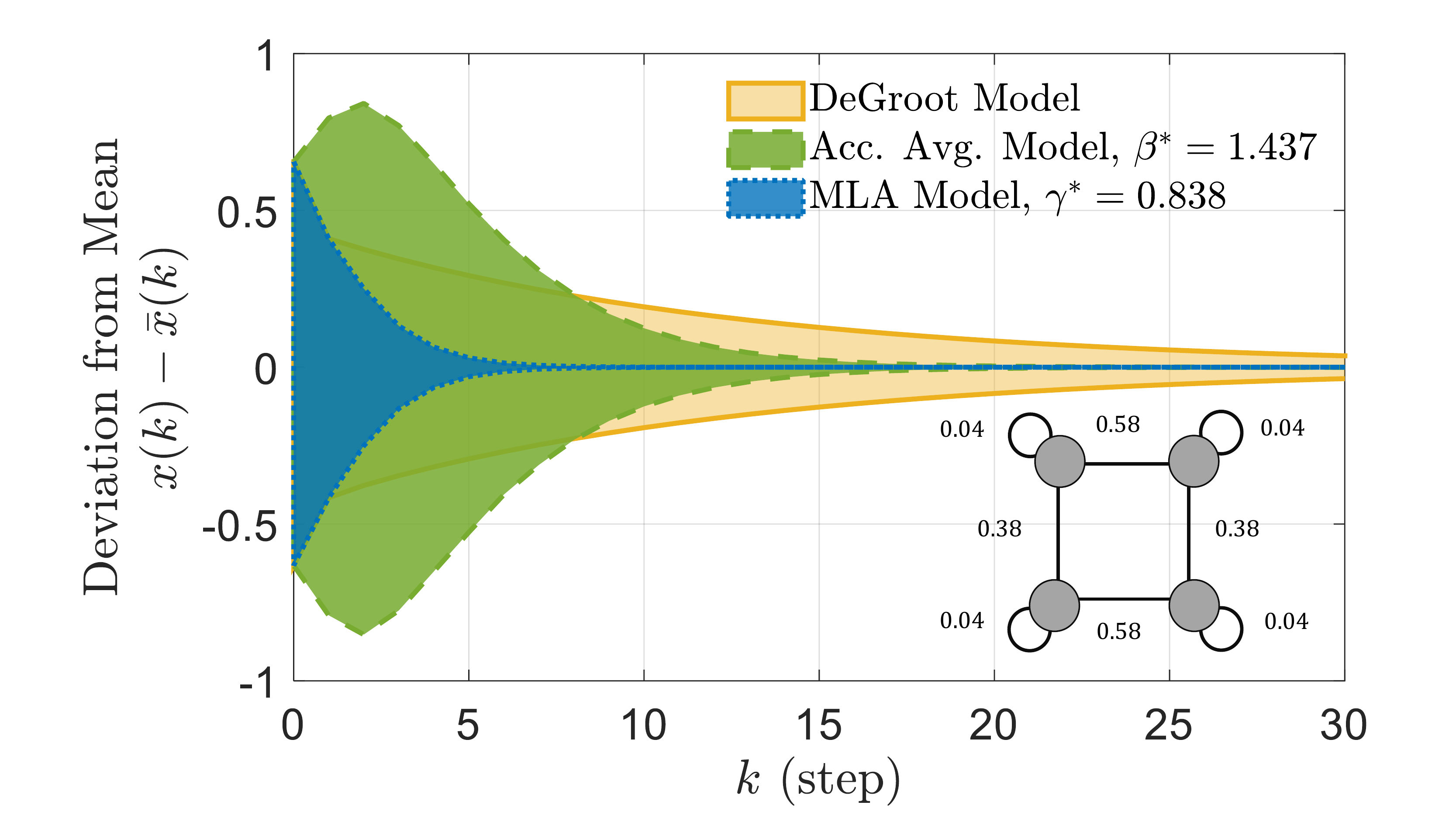}
         \caption{\small 
         Comparison of rates of convergence in three averaging models for a ring network with self-loops. The plot shows the envelope enclosed between the maximum and minimum \textit{deviation} from the mean state and is simulated using 1000 randomly chosen initial conditions. MLA converges faster.}
         \label{fig:fastestCoverge}
\end{figure}

\section{Conclusion}

We propose a simple modification to the accelerated averaging scheme introduced in \cite{muthukrishnan1998first}. In the proposed model (MLA), we apply the DeGroot update to the the current and past states of neighboring nodes followed by a linear combination step. The MLA model is applicable to networks that are symmetric, primitive and row-stochastic. We find the optimal model parameter, $\gamma^*$, for which MLA converges faster than both the DeGroot and accelerated averaging algorithms under certain network constraints. Another important contribution is that unlike the other two algorithms, MLA converges even for periodic networks. A summary of the important results is given in Fig.~\ref{fig:Summary}. Future work could involve extending the results presented to a larger class of matrices such as asymmetric matrices.
\begin{figure*}
     \centering
\vspace{-0.35cm}
        \includegraphics[trim= 0 0 0 0, clip, width=0.85\textwidth]{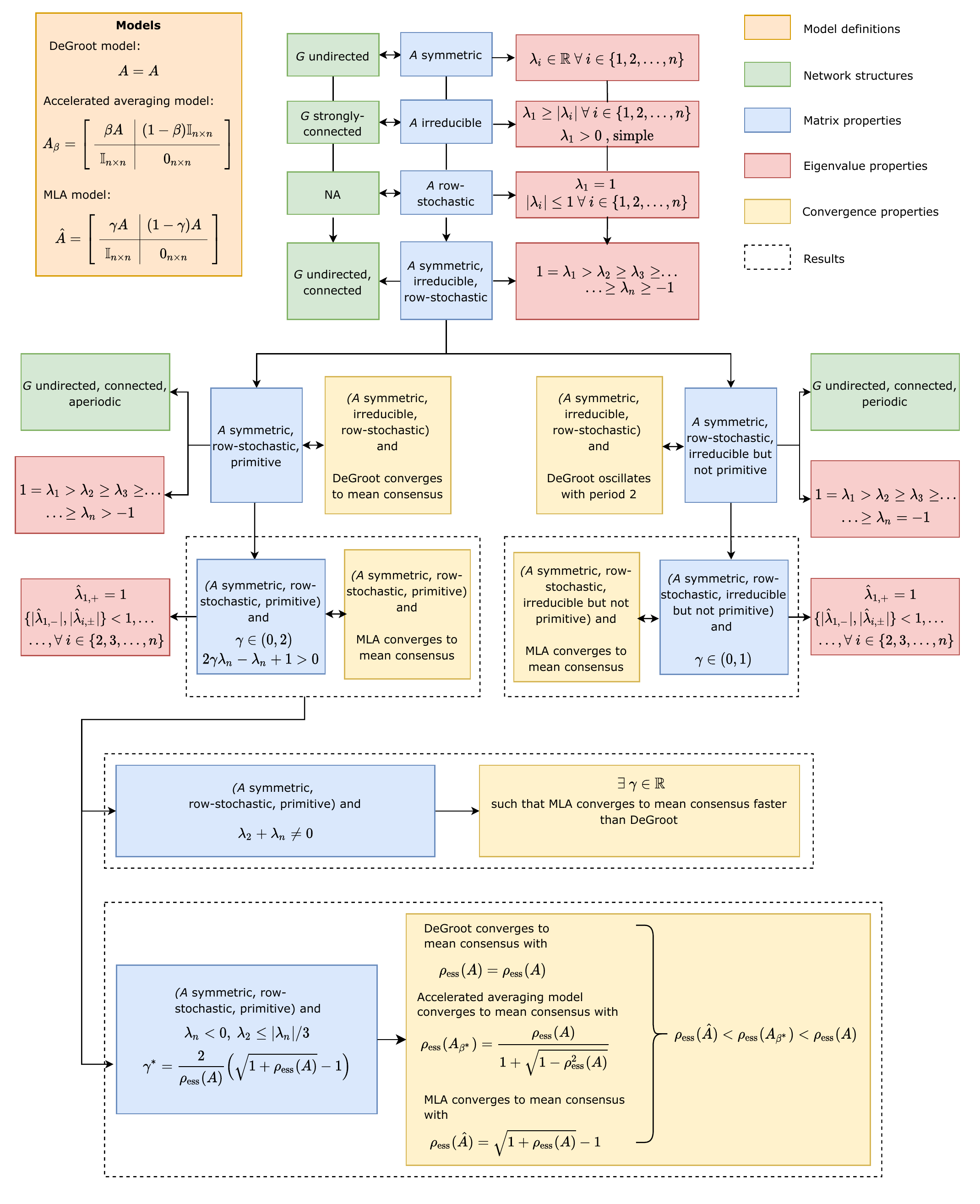}
         \caption{\small A summary of results for MLA model. Here, $G$ is the network corresponding to the adjacency matrix $A$ for the DeGroot model. All other notations follow from the main text.}
         \label{fig:Summary}
\end{figure*}

\bibliographystyle{IEEEtran} 

\begin{thebibliography}{10}
\providecommand{\url}[1]{#1}
\csname url@samestyle\endcsname
\providecommand{\newblock}{\relax}
\providecommand{\bibinfo}[2]{#2}
\providecommand{\BIBentrySTDinterwordspacing}{\spaceskip=0pt\relax}
\providecommand{\BIBentryALTinterwordstretchfactor}{4}
\providecommand{\BIBentryALTinterwordspacing}{\spaceskip=\fontdimen2\font plus
\BIBentryALTinterwordstretchfactor\fontdimen3\font minus
  \fontdimen4\font\relax}
\providecommand{\BIBforeignlanguage}[2]{{%
\expandafter\ifx\csname l@#1\endcsname\relax
\typeout{** WARNING: IEEEtran.bst: No hyphenation pattern has been}%
\typeout{** loaded for the language `#1'. Using the pattern for}%
\typeout{** the default language instead.}%
\else
\language=\csname l@#1\endcsname
\fi
#2}}
\providecommand{\BIBdecl}{\relax}
\BIBdecl

\bibitem{degroot1974reaching}
M.~H. DeGroot, ``Reaching a consensus,'' \emph{Journal of the American
  Statistical Association}, vol.~69, no. 345, pp. 118--121, 1974.

\bibitem{muthukrishnan1998first}
S.~Muthukrishnan, B.~Ghosh, and M.~H. Schultz, ``First-and second-order
  diffusive methods for rapid, coarse, distributed load balancing,''
  \emph{Theory of Computing Systems}, vol.~31, no.~4, pp. 331--354, 1998.

\bibitem{acemouglu2013opinion}
D.~Acemo{\u{g}}lu, G.~Como, F.~Fagnani, and A.~Ozdaglar, ``Opinion fluctuations
  and disagreement in social networks,'' \emph{Mathematics of Operations
  Research}, vol.~38, no.~1, pp. 1--27, 2013.

\bibitem{friedkin1990social}
N.~E. Friedkin and E.~C. Johnsen, ``Social influence and opinions,''
  \emph{Journal of Mathematical Sociology}, vol.~15, no. 3-4, pp. 193--206,
  1990.

\bibitem{deffuant2000mixing}
G.~Deffuant, D.~Neau, F.~Amblard, and G.~Weisbuch, ``Mixing beliefs among
  interacting agents,'' \emph{Advances in Complex Systems}, vol.~3, no. 01n04,
  pp. 87--98, 2000.

\bibitem{hegselmann2002opinion}
R.~Hegselmann, U.~Krause \emph{et~al.}, ``Opinion dynamics and bounded
  confidence models, analysis, and simulation,'' \emph{Journal of Artificial
  Societies and Social Simulation}, vol.~5, no.~3, 2002.

\bibitem{FB-LNS}
\BIBentryALTinterwordspacing
F.~Bullo, \emph{Lectures on Network Systems}, 1st~ed.\hskip 1em plus 0.5em
  minus 0.4em\relax Kindle Direct Publishing, 2020, with contributions by J.
  Cortes, F. Dorfler, and S. Martinez. [Online]. Available:
  \url{http://motion.me.ucsb.edu/book-lns}
\BIBentrySTDinterwordspacing

\bibitem{olfati2007consensus}
R.~Olfati-Saber, J.~A. Fax, and R.~M. Murray, ``Consensus and cooperation in
  networked multi-agent systems,'' \emph{Proceedings of the IEEE}, vol.~95,
  no.~1, pp. 215--233, 2007.

\bibitem{xiao2004fast}
L.~Xiao and S.~Boyd, ``Fast linear iterations for distributed averaging,''
  \emph{Systems \& Control Letters}, vol.~53, no.~1, pp. 65--78, 2004.

\bibitem{xiao2007distributed}
L.~Xiao, S.~Boyd, and S.-J. Kim, ``Distributed average consensus with
  least-mean-square deviation,'' \emph{Journal of Parallel and Distributed
  Computing}, vol.~67, no.~1, pp. 33--46, 2007.

\bibitem{aysal2008accelerated}
T.~C. Aysal, B.~N. Oreshkin, and M.~J. Coates, ``Accelerated distributed
  average consensus via localized node state prediction,'' \emph{IEEE
  Transactions on Signal Processing}, vol.~57, no.~4, pp. 1563--1576, 2008.

\bibitem{wang2013accelerated}
H.~Wang, X.~Liao, and T.~Huang, ``Accelerated consensus to accurate average in
  multi-agent networks via state prediction,'' \emph{Nonlinear Dynamics},
  vol.~73, no. 1-2, pp. 551--563, 2013.

\bibitem{wang2010finite}
L.~Wang and F.~Xiao, ``Finite-time consensus problems for networks of dynamic
  agents,'' \emph{IEEE Transactions on Automatic Control}, vol.~55, no.~4, pp.
  950--955, 2010.

\bibitem{li2011finite}
S.~Li, H.~Du, and X.~Lin, ``Finite-time consensus algorithm for multi-agent
  systems with double-integrator dynamics,'' \emph{Automatica}, vol.~47, no.~8,
  pp. 1706--1712, 2011.

\bibitem{cao2014finite}
Y.~Cao and W.~Ren, ``Finite-time consensus for multi-agent networks with
  unknown inherent nonlinear dynamics,'' \emph{Automatica}, vol.~50, no.~10,
  pp. 2648--2656, 2014.

\bibitem{ren2010distributed}
W.~Ren and Y.~Cao, \emph{Distributed coordination of multi-agent networks:
  emergent problems, models, and issues}.\hskip 1em plus 0.5em minus
  0.4em\relax Springer Science \& Business Media, 2010.

\end{thebibliography}

\end{document}